\newcommand{\pkg}[1]{{\fontseries{b}\selectfont #1}}
\newcommand{\bZ}{\mathbf{Z}}
\newcommand{\bX}{\mathbf{X}}
\newcommand{\R}{\mathbb{R}}
\newcommand{\Exponential}{\mathcal{E}}
\newcommand{\Poisson}{\mathcal{P}}
\newcommand{\Uniform}{\mathcal{U}}
\DeclareMathOperator{\Bernoulli}{\mathcal{B}}
\newtheorem{proposition}{Proposition}
\DeclareMathOperator{\ICLlin}{ICL_{\textit{lin}}} 
\DeclareMathOperator{\ICLex}{ICL_{\textit{ex}}} 
\DeclareMathOperator{\ICL}{ICL}
\DeclareMathOperator{\ICLbic}{ICL_{\textit{BIC}}}
\DeclareMathOperator{\Dir}{Dir}
\DeclareMathOperator{\Betad}{Beta}
\newcommand{\Mult}{\mathcal{M}}
\newcommand{\Partition}{\mathcal{P}}
\bmdefine{\bC}{C}
\newcommand{\p}{p}
\newcommand{\Simplex}{\Delta}
\newcommand{\nb}{n}
\newcommand{\K}{K}
\renewcommand{\dim}{p}
\bmdefine{\Obs}{\bm{X}}
\bmdefine{\obs}{\bm{x}}
\newcommand{\rawobs}{x}
\bmdefine{\Clust}{\bm{Z}}
\bmdefine{\clust}{\bm{z}}
\newcommand{\rawclust}{z}
\bmdefine{\scores}{\obs}
\bmdefine{\param}{\bm{\theta}}
\newcommand{\rawparam}{\theta}
\bmdefine{\globalparam}{\vartheta}
\bmdefine{\bPi}{\bm{\pi}}
\newcommand{\totalcount}{c}
\bmdefine{\bmean}{\bm{m}}
\bmdefine{\bS}{\bm{S}}
\bmdefine{\bD}{\bm{D}}
\bmdefine{\bDelta}{\Delta}
\bmdefine{\bBeta}{\beta}
\bmdefine{\balpha}{\bm{\alpha}}
\bmdefine{\Clustrow}{\bm{Z}_{r}}
\bmdefine{\clustrow}{\bm{z}}
\newcommand{\rawclustrow}{z}
\bmdefine{\Clustcol}{\bm{Z}_{c}}
\bmdefine{\clustcol}{\bm{z}}
\newcommand{\rawclustcol}{z}
\bmdefine{\bPirow}{\bm{\pi}_{r}}
\bmdefine{\bPicol}{\bm{\pi}_{c}}
\newcommand{\Kr}{{K_{r}}}
\newcommand{\Kc}{{K_{c}}}
\renewcommand{\Clustrow}{\bm{Z}^{r}}
\renewcommand{\clustrow}{\bm{z}^{r}}
\renewcommand{\rawclustrow}{z^{r}}
\renewcommand{\Clustcol}{\bm{Z}^{c}}
\renewcommand{\clustcol}{\bm{z}^{c}}
\renewcommand{\rawclustcol}{z^{c}}
\renewcommand{\bPirow}{\bm{\pi}^{r}}
\renewcommand{\bPicol}{\bm{\pi}^{c}}
\newcommand{\revision}[1]{#1}
\title{Hierarchical clustering with discrete latent variable models and the integrated classification likelihood}
\author{Etienne C\^{o}me \\ COSYS/GRETTIA, Universit\'{e} Gustave-Eiffel, \\ Noisy-Le-Grand, France\\
			\And
			Nicolas Jouvin \\
			Universit\'{e} Paris 1 Panth\'{e}on-Sorbonne, SAMM, France\\
			FP2M, CNRS FR 2036, Paris, France \\
			\And 
			Pierre Latouche\\
			Universit\'{e} de Paris, MAP5, CNRS,  \\
			FP2M, CNRS FR 2036, Paris, France \\ 
			\And 
			Charles Bouveyron \\
			Universit\'{e} C\^{o}te d’Azur, Inria, CNRS, Laboratoire J.A. Dieudonné\\
			Maasai research team, Nice, France}
\begin{document}
\maketitle

	\begin{abstract}
Finding a set of nested partitions of a dataset is useful to uncover relevant structure at different scales, and is often dealt with a data-dependent methodology. In this paper, we introduce a general two-step methodology for model-based hierarchical clustering. Considering the integrated classification likelihood criterion as an objective function, this work applies to every discrete latent variable models (DLVMs) where this quantity is tractable. The first step of the methodology involves maximizing the criterion with respect to the partition. Addressing the known problem of sub-optimal local maxima found by greedy hill climbing heuristics, we introduce a new hybrid algorithm based on a genetic algorithm \revision{which allows to efficiently explore the space of solutions}. The resulting algorithm carefully combines and merges different solutions, and allows the joint inference of the number $K$ of clusters as well as the clusters themselves. Starting from this natural partition, the second step of the methodology is based on a bottom-up greedy procedure to extract a hierarchy of clusters. In a Bayesian context, this is achieved by considering the Dirichlet cluster proportion prior parameter $\alpha$ as a regularization term controlling the granularity of the clustering. A new approximation of the criterion is derived as a log-linear function of $\alpha$, enabling a simple functional form of the merge decision criterion. This second step allows the exploration of the clustering at coarser scales. The proposed approach is compared with existing strategies on simulated as well as real settings, and its results are shown to be particularly relevant. A reference implementation of this work is available in the \pkg{R} package 
\pkg{greed} accompanying the paper\footnote{available at \href{http://github.com/comeetie/greed} {http://github.com/comeetie/greed}}. 
	\end{abstract}
\keywords{Mixture models \and block modeling \and co-clustering \and genetic algorithm \and model-based }

	\section{Introduction}
	\label{HCICL:intro}
	Partitional approaches to clustering seek a partition $\Partition$ of the observations into $\K$ class, or clusters. Hierarchical clustering extends this idea by seeking a path of nested partitions, from finer to coarser. Usually dealt with \textit{ad hoc} similarity functions depending on the data at hand, this approach is popular in unsupervised data analysis such as biological taxonomy, phylogenetics or social network analysis for uncovering hierarchical structures \parencite[Section 4]{everitt2011cluster}. In this paper, we address the problem of building hierarchies of partitions \revision{in a unified approach within the statistical framework of model-based clustering}. 

	\subsection{Model-based clustering with discrete latent variable models}
	
	Model-based clustering is a principled approach for clustering, with a variety of flexible models depending on the data at hand \parencite{bouveyron2019model}. Aiming at understanding the different sources of randomness in observations, it can therefore help in interpreting the clusters uncovered in practice. In this paper, we consider a general class of models used in model-based clustering that we call discrete latent variable models (DLVMs). This class encompasses finite mixture models \parencite{McLachlan2000}, which is the most popular instance of model-based approaches. Moreover, it also includes other popular models, which do not exactly fit the definition of finite mixtures. Popular examples are the stochastic block model (SBM) for network analysis \parencite{wang1987stochastic,nowicki2001estimation} and its extensions \parencite[see][for instance]{karrer2011stochastic}, as well as the latent block model \parencite[LBM,][]{govaert2010} for co-clustering. In model-based clustering, the partition is a latent variable $\Clust$, where each element of $\Clust$ is a binary vector of size $\K$ indicating clustering membership, and $\K$ denotes the number of clusters.
	The general definition of a DLVM assumes that the observations, denoted as $\Obs$, are drawn from a two-step process: first, the latent partition $\Clust$ is drawn independently from a product of multinomial distributions parameterized by $\bPi$. Then, the observations are supposed to be independent given the whole partition. The complete likelihood, known as the \textit{classification} likelihood in this context, is written as:
	\begin{align}
	\label{HCICL:eq:DLVM}
	\p(\Obs, \bZ\mid \bPi, \param) = 
	\prod_{\clust \in \Clust} \p(\clust \mid \bPi) \underset{\text{factorized}}{\underbrace{\prod_{\obs \in \Obs}\p(\obs \mid \Clust, \param)}},
	\end{align}
	where $(\bPi, \param)$ are a set of parameters respectively controlling the cluster membership and the conditional distributions.

	In the case of finite mixture models, the observations are $\nb$ independent random vectors $\Obs = \{\obs_1, \ldots, \obs_{\nb}\}$ in \revision{dimension $p$}, which can be summarized in a data matrix $\Obs $ \revision{of size $\nb \times \dim$}. In this context, each observation is assigned to a latent multinomial variable $\clust_i \in \{0,1\}^K$, defining its cluster assignment. The latter is independently drawn from a multinomial distribution, with proportions $\bPi=(\pi_1, \dots, \pi_\K)$. Then, an observation $\obs_{i}$ follows some conditional distribution depending on the value of $\clust_{i}$, and the sampling process for all $i$ is as follows:
	\begin{equation}
	\label{eq:MixtureModels}
	\begin{aligned}
	\clust_{i} \mid \bPi& \sim \mathcal{M}(1,\bPi),\\
	\obs_{i} \mid \rawclust_{ik}=1, \param_{k} &\sim \p(\obs_{i} \mid \param_{k}).
	\end{aligned}
	\end{equation}
	The parameters $\bPi$ controls the prior probability of belonging to each group, while the mixture parameters control the distribution in the $k$-th cluster, and depend on the observational model at hand. For instance, in a Gaussian mixture we have $ \p(\obs_{i} \mid \param_{k}) = \mathcal{N}(\obs_i \mid \bmean_k, \bS_k)$, respectively the mean and covariance matrix in cluster $k$.
	
	In the case of network analysis with the stochastic block model, the observations are the edges $\Obs = \{\rawobs_{ij}\}$, where $\rawobs_{ij}$ represents the presence of absence of an edge. It can be binary, $\rawobs_{ij} \in \{0, 1\}$, or weighted $\rawobs_{ij} \in \R$ \parencite{mariadassou2010uncovering}. Observing the edges, \textit{e.g.} the topology of the graph, we wish to cluster the nodes $\{1, \ldots, \nb\}$. Thus, each node $i$ is assigned to a cluster latent variable $\clust_{i}$ and the edges are supposed to be conditionally independent given the partition, with a conditional distribution depending only on the clusters of their out and end-nodes. The sampling process is then written as:
	\begin{equation}
	\label{eq:SBM}
	\begin{aligned}
	\rawobs_{ij} \mid \rawclust_{ik}\rawclust_{jl}=1, \param_{kl} &\sim \p(\rawobs_{ij} \mid \param_{kl}).
	\end{aligned}
	\end{equation} 
	The \textit{block} terminology stems from the assumption that edges with extremities in the same pair of clusters $(k,l)$ are independent and identically distributed, hence forming homogeneous block in the adjacency matrix $\Obs$. As in mixture models, the parameter $\bPi$ controls the group proportions and the latent partition is drawn independently from $\Mult_{\K}(1, \bPi)$. However, the SBM does not exactly fit the definition of finite mixture models, since the observations are no longer marginally independent. \textcite{matias2014modeling} discuss this fact using the moralized graphical model of SBM, revealing the posterior dependencies that do not arise in finite mixture models. The set of parameters $\param_{kl}$ is now specific to the pair $(k, l)$ of clusters, and depends on the specific model. For instance, in the case of a binary SBM, the block distributions are Bernoulli parameterized by $\rawparam_{kl} \in [0; 1]$. Degree-corrected versions have also been proposed to account for a strong degree heterogeneity inside blocks, displayed by real-world networks \parencite{karrer2011stochastic, zhu2014}. In these models the Bernoulli distribution of edges is replaced by a Poisson, the latter making a good approximation for the Bernoulli when the mean parameter is small \parencite{zhao2012consistency}.
	
	In co-clustering, the observations $\Obs = \{ \rawobs_{ij} \}$ are supposed to be given in a data matrix $\Obs \in \R^{\nb \times \dim}$, and one seeks a bipartition $\Clust = (\Clustrow, \Clustcol)$ with $\Kr$ clusters over the $\nb$ rows and $\Kc$ clusters over the $\dim$ columns. The latent block model \parencite[LBM,][]{govaert2010} supposes conditional independence of entries $\obs_{ij}$ given $\clustrow_{i}$ and $\clustcol_{j}$. The sampling scheme is given as:
	\begin{equation}
	\label{eq:LBM}
	\rawobs_{ij} \mid \rawclustrow_{ik} \rawclustcol_{jl}=1, \param_{kl} \sim \p(\rawobs_{ij} \mid \param_{kl}),
	\end{equation}
	and is very close to SBM. Indeed, the latter may be viewed as a particular instance of LBM when $\nb = \dim$ and $\Clustcol = \Clustrow$. Moreover, the row partition $\Clustrow$ and column partition $\Clustcol$ are supposed to be respectively drawn \textit{i.i.d.} from $\Mult_{\Kr}(1, \bPirow)$ and $\Mult_{\Kc}(1, \bPicol)$. Thus, the distribution of $\Clust$ is a product of multinomials parameterized by $\bPi = (\bPirow, \bPicol)$, hence fitting the definition of \Cref{HCICL:eq:DLVM}.
	
	
	Statistical inference constitutes the most popular approach to model-based clustering. In the one hand, the frequentist approach seeks to estimate the model parameters $(\bPi, \param)$ via maximum-likelihood estimation. As the partition $\Clust$ is considered to be a latent variable, the expectation-maximization (EM) algorithm \parencite{Dempster77} has become quite a universal tool, especially for finite mixture models \parencite{mclachlan2007algorithm}. For other DLVMs, variational extensions \parencite{blei2017variational} have been introduced to deal with the problem of the intractability of the posterior distributions. Notably, this is the case for the binary SBM \parencite{Daudin2008} and the LBM \parencite{govaert2010}.  On the other hand, the Bayesian paradigm considers model parameters as random variables with so-called prior distributions. In this context, statistical estimation seeks to sample from the posterior distributions of parameters, which is usually done via some Markov-Chain Monte-Carlo scheme \parencite[chap. 4]{fruhwirth2019handbook}. Clustering comes as a byproduct of these procedures, as one can then estimate $\Clust$ from its exact or approximated posterior distribution.

	\subsection{Exact ICL criterion and greedy hill climbing heuristics}
	\label{ssec:introICL}
	
	In addition to its flexibility, the statistical framework provides a sound way to choose between different values of $\K$ as a model selection problem. Specific statistical tools are required in this context, and several penalized likelihood criteria were proposed such as the Akaike information criterion \parencite[AIC,][]{akaike1974new}, or the Bayesian information criterion \parencite[BIC,][]{schwarz1978estimating}. For a recent and detailed review on model selection for model-based clustering, we refer to \textcite[chap. 8]{fruhwirth2019handbook}. In this paper, we focus on the ICL criterion, which was specifically introduced in the context of model-based clustering. \textcite{Biernacki2000} first used a combination of Laplace and Stirling approximations on $\log \p(\Obs, \Clust \mid \K)$ to find an asymptotic criterion called $\ICLbic$ due to its close link to the BIC. \revision{Since this criterion involves an estimation of $\Clust$, that is a hard assignment of observations to clusters, it is specifically designed for the clustering task. Thus, as shown  in \cite{Biernacki2010}, it is more robust to model miss-specification than BIC which focuses on the estimation of the model density. While $\ICLbic$ depends explicitly on an estimation of $\Clust$, it is still referred to as a model selection criterion in the literature and used in the model-based clustering context. Indeed, its derivation is dependent on the statistical models considered.}

	Recent works have also considered exact expressions of the ICL, putting a factorized conjugate prior distribution $\p(\bPi, \param \mid \balpha, \bBeta) = \p(\bPi\mid \balpha) \p(\param \mid \bBeta)$ over the model parameters, and defined as:
	\begin{equation}
	\label{eq:ICLex}
	\begin{aligned}
		\ICLex(\Clust; \balpha, \bBeta) &= \log\left(\int_{\param}\int_{\bPi}\p(\bX|\bZ,\param)\p(\param|\bBeta)\p(\bZ|\bPi)\p(\bPi|\balpha)\dif\param \dif\bPi\right), \\
		& =  \log \p (\Obs \mid \Clust, \bBeta) + \log \p (\Clust \mid \balpha).
	\end{aligned}
	\end{equation}
The $\balpha$ parameters control the conjugate distribution, which, in the case of a DLVM, is a Dirichlet\footnote{Or a product of Dirichlet distributions $\Dir_{\Kr}(\balpha_r) \times \Dir_{\Kc}(\balpha_c)$ in the case of co-clustering with the LBM. Except for an additional notation burden, the rest of the discussion easily extends to this case, which is discussed in detail in \cref{sec:models}} over group proportions $\Dir_{\K}(\balpha)$. This part is thus common to all DLVMs in the sense that it does not depend on the observational model on $\Obs$. If a symmetric Dirichlet is chosen, with $\alpha_k = \alpha$, the second term can be made explicit using the independence of the elements of $\Clust$:
\begin{equation}
\label{eq:ICLexdec}
\ICLex(\Clust; \alpha,\bBeta) = \log \p(\Obs \mid \Clust, \bBeta)  +  \log\left( \dfrac{\Gamma(\K \alpha)\prod\limits_{k=1}^{\K} \Gamma(\alpha + n_k)}{\Gamma(\alpha)^\K  \Gamma(\nb + \alpha\, \K)}\right) , 
\end{equation}
where $\Gamma(\cdot)$ is the Gamma function and $n_k=\sum_i \rawclust_{ik}$. Usually, the hyper-parameter $\alpha$ is set to $1$ or $\frac{1}{2}$ to specify uninformative a uniform or a Jeffreys prior. 

The hyper-parameters $\bBeta$ control the conjugate prior over the mixture parameters $\param$, and depends on the generative model at hand. Naturally, such criterion is restricted to particular DLVMs, where such conjugate distributions are easy to derive, so that the first term in \Cref{eq:ICLex} is analytic. However, this class is quite large and expressions are available for the mixture of multinomials \parencite{Biernacki2010} and Gaussian mixture \parencite{Bertoletti2015}, while being virtually feasible for any mixture of exponential families as they admit natural conjugate priors \parencite[p. 42]{gelman2004bayesian}. Exact ICL criteria were also derived for the SBM \parencite{Come2015}, the LBM \parencite{Wyse2017} and degree-corrected variants \parencite{Newman2016,Riolo2017}. For the sake of self-consistency and to ensure unified notations, \Cref{sec:models} contains expressions and technical derivations of $\ICLex$ for all the considered models, in particular for the directed formulation of SBM and its degree-corrected variants.

In between the frequentist and the Bayesian approaches, a new line of work started to consider direct maximization of $\ICLex$ with respect to the partition $\bZ$, avoiding the inference step over the parameters $(\bPi, \param)$ \revision{and tackling more directly the clustering task}. In order to solve this discrete and combinatorial optimization problem, greedy heuristics were successfully tested to directly optimize this criterion over the space of possible partitions. These approaches consist in hill climbing algorithms, starting from an initial partition and greedily swapping clusters until some local maximum is met. Eventually, at the end, some clusters are merged up to the point where no more merge moves can maximize the $\ICLex$. Such a type of algorithms performs model selection and clustering at the same time and are computationally attractive compared to approximate or exact inference alternatives. This approach dates back to \textcite{tessier2006evolutionary}, for the latent class model. It was then extended in \textcite{Come2015} for SBM, and applied to other DLVMs such as Gaussian mixture models \parencite{Bertoletti2015}, LBM \parencite{Wyse2017} and dynamic variants of SBM \parencite{Corneli2016,zreik2016dynamic}. \revision{The aforementioned greedy maximization procedure comes with a cost. In practice, the objective is highly multimodal, and the combinatorial nature of the search space multiplies the presence of local maxima in which the methods may be stuck. Different techniques were proposed to tackle this problem, including several restarts \parencite{Come2015}, batch versions \parencite{Bertoletti2015}, and genetic evolutionary algorithms \parencite[GAs,][]{tessier2006evolutionary}. This latter methodology borrows from biological evolution principles, combining solutions via crossover operators, allowing random modifications and discarding poor solutions, in an analogy to genetic inheritance, mutations and natural selection. In the context of $\ICLex$ maximization, they represent a promising avenue to efficiently explore the partition space, avoiding the pitfalls of sub-optimal local maxima through the recombination and mutation operators \parencite[p. 137]{fruhwirth2019handbook}.}

	\subsection{Contributions and organization of the paper}
	
	This paper builds on two main contributions to propose a two-step methodology for hierarchical clustering. 
	
	First, we address the issue of spurious local maxima in greedy maximization of $\ICLex$. We propose a hybrid genetic algorithm mixing an evolutionary strategy with local search to optimize the $\ICLex$ criterion, efficiently exploring the space of partitions. The novelty and efficiency of this approach resides in the representation of solutions as set partitions, and in the crossover operator used to recombine solutions, carefully preserving their structure. This algorithm, \revision{presented in \Cref{sec:HybridAlgo}}, is adaptable to a wide variety of DLVMs, as soon as swap and merge moves can be efficiently computed. \revision{Such an algorithm allows the inference of the number of clusters $K$ and of the clusters themselves. However, it usually leaves a partition with $K > 1$ and does not allow to build a hierarchy of partitions up to $K=1$.}
	
	Second, we introduce an agglomerative hierarchical algorithm considering $\ICLex$ as a function of the hyper-parameter $\alpha$ and relying on a new approximation, using the asymptotic of the log-Gamma function when $\alpha$ goes to $0$. We show that decreasing $\alpha$ can unlock fusions in the sense that coarser partitions achieve a greater ICL value. Starting from an ICL-dominant solution at a given level $\alpha$, typically $1$, the proposed heuristic extracts a set of nested clustering that are each dominants with respect to this new criterion over some range of $\alpha$ values. In addition, this strategy, \revision{presented in \Cref{sec:Hierarchical}}, enables the construction of a cluster dendrogram, giving a natural ordering for the clusters which is interesting for interpretation and visualization purposes, particularly on real datasets.
	
	These two contributions are generically applicable in the framework of DLVMs for which conjugate priors can be easily derived for the observational model. Moreover these algorithms are naturally linked, working with similar objectives, and the first one can be used as an initialization for the second to extract a hierarchical clustering. One of the particularity of this approach is that it only extracts the relevant part of the dendrogram, since the latter typically starts with an optimal partition obtained at $\alpha=1$ or $1/2$. Therefore, it avoids the analysis of uninformative fusions commonly encountered in the firsts stages of classical hierarchical agglomerative clustering algorithms. This approach is also computationally efficient and may handle large datasets which could be hard to grasp with classical fully hierarchical algorithms. 
	
	\Cref{sec:exp} gives a detailed investigation of the two algorithms behaviors on simulated and real datasets, along with a thorough comparison with related model-based clustering algorithms. Three types of DLVMs are considered: the mixture of multinomials, the SBM and LBM, as well as their degree-corrected versions.

	Finally, the open-source \pkg{R} package \parencite{Rcore} \pkg{greed} provides a reference implementation of the algorithms introduced in this paper. The implementation is extendable and new models can be integrated. The main computationally demanding methods were developed in \pkg{Cpp} thanks to the \pkg{Rcpp} package \parencite{Eddelbuettel2017} taking advantages of sparse matrix computational efficiency thanks to the \pkg{RcppArmadillo} and \pkg{Matrix} packages \parencite{Eddelbuettel2014,Bates2019}. Eventually, the \pkg{future} package \parencite{Bengtsson2019} was used to enable easy parallelization of the computations of the proposed hybrid genetic algorithm.
		
	\subsection{\revision{Motivating example}}
		
	As a motivating example for the proposed two-step methodology, we simulate a random SBM graph with $\nb = 1500$ nodes and a hierarchical cluster structure with 3 clusters each composed of 5 small clusters. The small clusters have an intra-connectivity probability of $0.1$ and a probability of connecting a node from the same cluster of $0.025$. Otherwise, two random nodes may be connected with a probability of $0.001$. Figure \ref{fig:motiv} compares the clustering results in the form of adjacency matrices obtained by different methods : a greedy hill climbing optimization with a random starting partition with twenty clusters, the proposed hybrid optimization algorithm and the same results after a reordering of the clusters with the hierarchical heuristic.	As clearly shown by this example, the greedy heuristic with a random starting point suffers from under-fitting with only four clusters extracted among the 15 simulated. The hybrid algorithm does not suffer from the same problem in this example, and recovers correctly the 15 simulated clusters. Finally, the hierarchical ordering enable a clear visualization of the hierarchical structure of this dataset, that is also clearly depicted in the extracted dendrogram presented in Figure~\ref{fig:motivdendo}.

			\begin{figure}[!ht]
		\centering
		\includegraphics[width=0.95\textwidth]{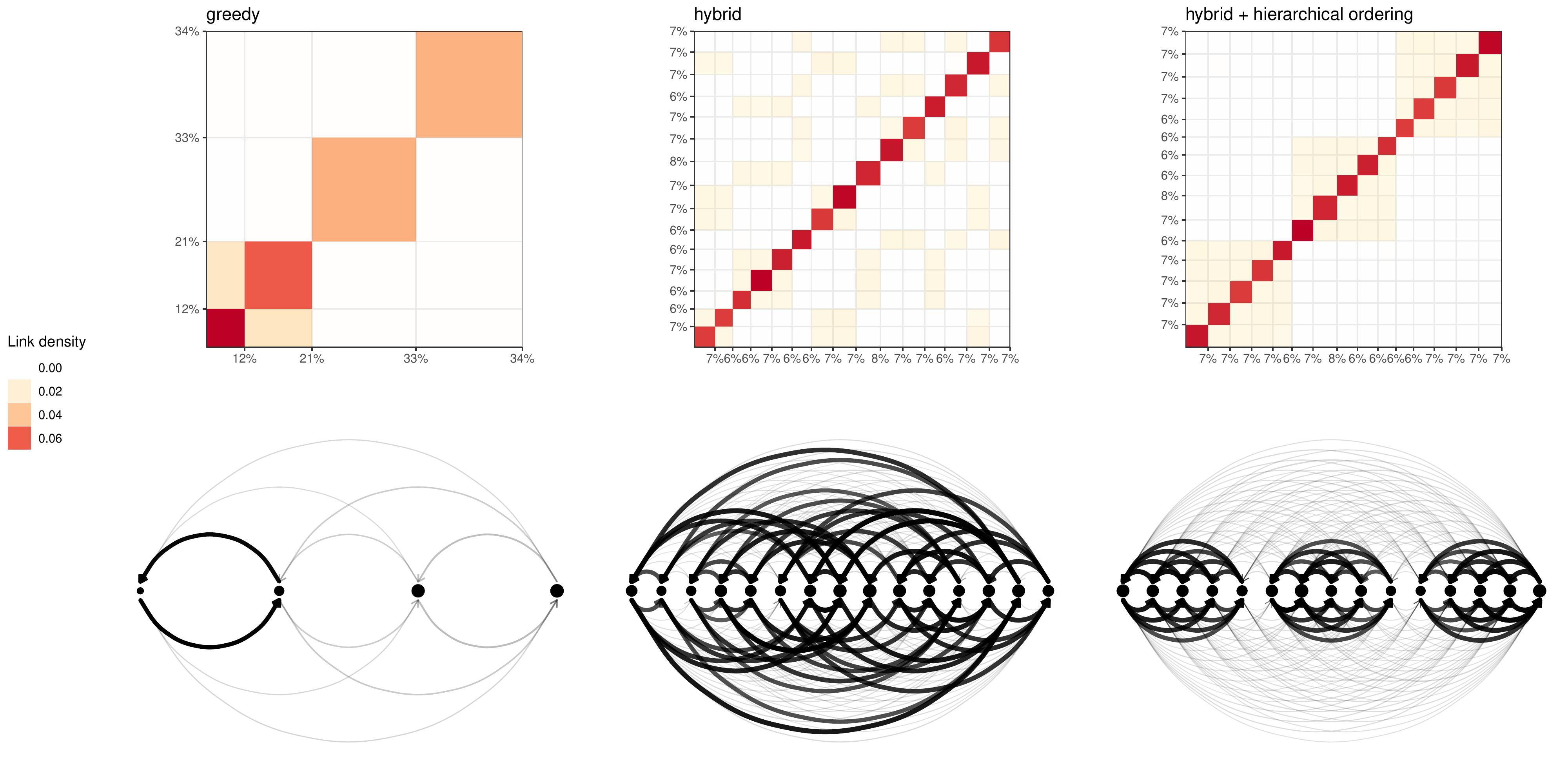}
		\caption{Motivating example for the proposed algorithms. Block matrix representation of the solutions (upper row) and cluster node link diagram (bottom row) obtained with (from left to right) a greedy algorithm with a random starting point, the proposed hybrid algorithm and the same clustering but with clusters re-arranged thanks to the hierarchical ordering.}
		\label{fig:motiv}
	\end{figure}
	\begin{figure}[!ht]
		\centering
		\includegraphics[width=0.45\textwidth]{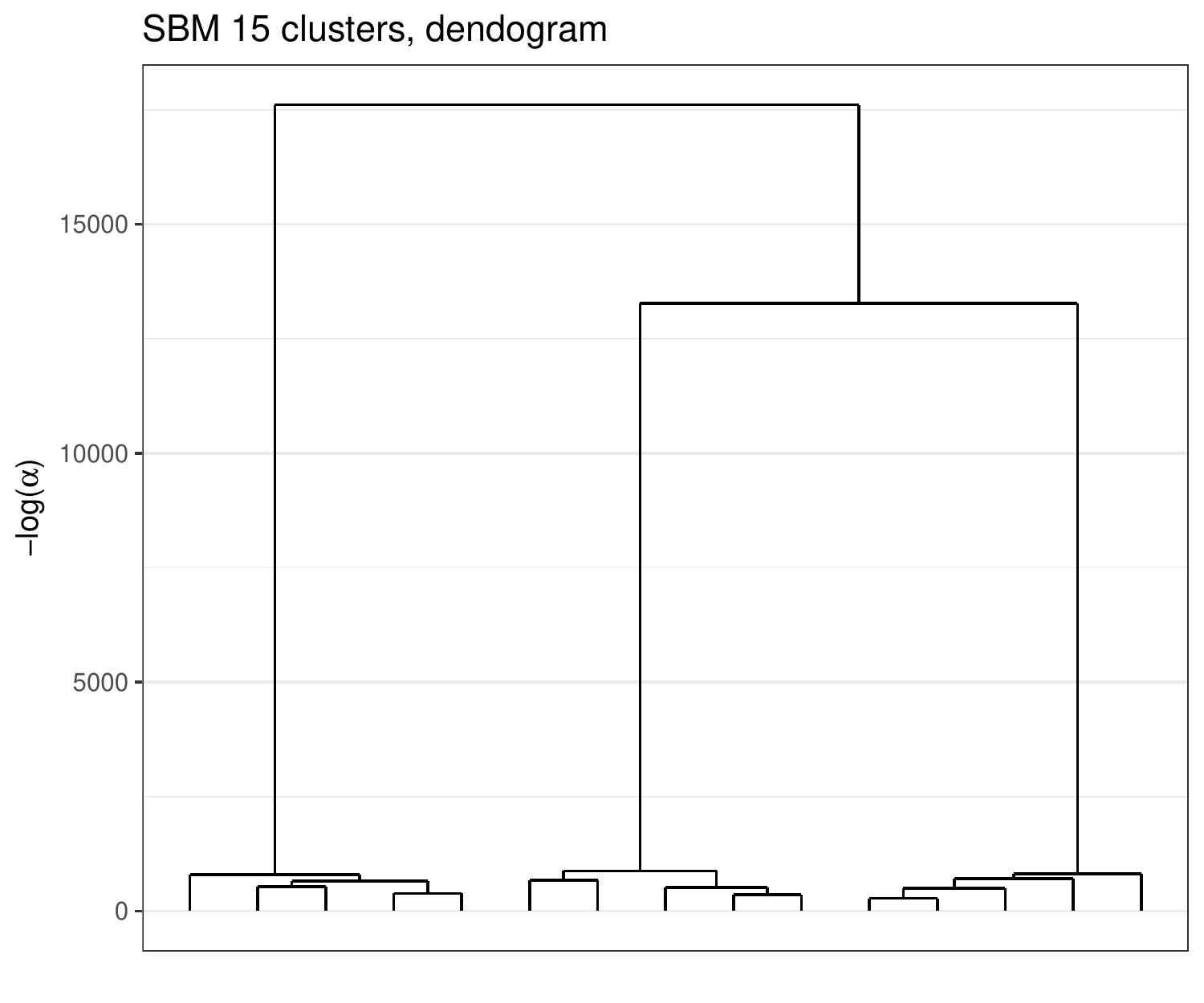}
		\caption{Motivating example for the proposed algorithms. Clusters dendrogram extracted with the hierarchical regularization path heuristic.}
		\label{fig:motivdendo}
	\end{figure}

	\section{A hybrid genetic algorithm for DLVMs}
	\label{sec:HybridAlgo}

	As explained above, several works rely on the $\ICLex$ criterion as an objective function to maximize with respect to the partition $\Clust$. These are mainly based on greedy hill climbing algorithms: starting from a carefully chosen over-segmented initial partition, or \textit{seed}, swaps and eventually merges are applied to increase the criterion. In addition to the competitive computational complexity and the ease of implementation, these algorithms may be seen as an automatic way to perform model selection, as clusters may be emptied during the process. In the SBM case, \textcite{Come2015} proposed a thorough comparison with state-of-the-art methods that illustrates the interest of such algorithms. 
	
	However, a major drawback of this approach is its dependency to the initialization. Indeed, defining a relevant initial partition is not trivial, and the method may lead to under-fitting as demonstrated in the introductory example in Figure~\ref{fig:motiv}. Here, the issue seems to lie in the lack of exploration of the partition space, and genetic algorithms (GAs) have been proposed to improve the exploration. Starting from a given solution, the latter evolves a population of candidate solutions by selecting some of the most promising ones, combining them, and mutating them until a specified number of generations or some stopping criterion is met. As described in \textcite[Chapter 2]{Eiben2003}, the fundamental components of such algorithms are the solution representation, the selection strategy and the variation operators used for recombination and mutation. However, while GAs are very good at identifying near-optimal regions of the search space, they can take a relatively long time to reach a local optimum in the region of interest. In order to improve their exploitation capacity, a number of works suggested hybridizing GAs with efficient local search algorithms capable of improving solutions between each generation \parencite[see][Chapter 10]{Eiben2003}. These evolutionary methods have been named in various ways, such as hybrid GAs, memetic GAs, and genetic local search algorithms.

	In the case of $\ICLex$ maximization, existing greedy heuristics may be seen as such local search algorithms, locally improving a partition, and we build on this idea to propose a hybrid GA. In the following, we discuss the practical choices made when designing the genetic algorithm. Moreover, we emphasize that, in this section, the prior parameters are considered to be fixed to uninformative or default values, and we only optimize $\ICLex(\Clust)$ with respect to the partition defined by $\Clust$. 
	


	\subsection{Solution recombination with the cross-partition operator}

	The first step towards building a GA is to define a way to represent candidate solutions inside the algorithm. The latter is also called the \textit{genotype space}, with genotypes as points in this space. This choice is fundamental as it guides the variation operators such as the recombination operator, also known as crossover, which combines two parent genotypes into a new one, and the mutation operator, which randomly modifies genotypes. In the case of clustering, the elements of the original space of solutions are partitions $\Partition = \{\bC_1,...,\bC_{\K}\}$ of $\{1,\hdots, \nb\}$ into $\K$ clusters, with a variable $\K$. \textcite{tessier2006evolutionary} used integer encoding, which consists in a vector of length $\nb$ where each individual is assigned to an integer $k \in \{1 \ldots, \K\}$ representing its cluster assignment, and is similar to $\Clust$. However, this approach presents a major drawback. Indeed, akin to the label switching problem in statistical inference, the $\ICLex$ objective function is invariant under a permutation of the cluster indices, and this representation is therefore heavily redundant. Thus, as emphasized in \textcite{Hruschka2009}, popular crossover operators based on crossover points will not consider this specificity and will completely break the structure of the solution. This is notably the case in \textcite{tessier2006evolutionary}, leading to slow evolution of the population of solutions. We propose to circumvent this issue by directly choosing the space of partitions as the genotype space, defining crossover and mutation operators on it. Such operators will not suffer from label switching, and will preserve the clustering structure present in the genotypes.
	
	\paragraph{Crossover operator} The crossover operator defines how two parent genotypes $\Partition^1 = \{\bC_1^1,...,\bC^1_{K_1}\}$ and $\Partition^2 =\{\bC_1^2,...,\bC^2_{K_2}\}$ are combined together to form an offspring. We propose to use the cross-partition, defined as the set of all possible intersections between the elements of the two partitions:
	\[
	\mathcal{P}^1\times\mathcal{P}^2 := \left\{\bC^1_k \cap \bC^2_l \,,\, \forall k\in \{1,...,\K_1\}, \forall l \in \{1,...,\K_2 \} \right\}\setminus \left\{\emptyset\right\}.
	\]
	This operator produces a new partition with at most $\K_1 \times \K_2$ clusters, which is a refinement of $\Partition^1$ and $\Partition^2$ in the sense that both parents may be reconstructed using merge operations. It is also the first common ancestor of both $\mathcal{P}^1$ and $\mathcal{P}^2$ in the partition lattice. Hence, its interest is twofold. First, as in the motivating example, if both parent partitions are under-fitted, crossing them allows the algorithm to go backward in the partition lattice, considering finer clustering. Second, it is particularly appropriate for the hybridization with greedy heuristics. Indeed, unnecessary clusters may be created when the crossed solutions are around the best one. Then, a greedy local search based on merge moves may be used to remove these clusters efficiently. 
	
	
	
	\subsection{Selection, mutation and the hybrid algorithm}
	
	The remaining aspects of the genetic algorithm concern the selection procedure and the mutation operator. As the population size $V$ is kept fixed throughout the algorithm, selection defines which parent genotypes are combined together to form offspring. On the basis on numerical experiments, we propose to keep a rank-based selection policy \parencite[see][pp.81-82]{Eiben2003}. In this scheme, the selected genotypes for building the next generation are chosen according to a probability proportional to their rank in terms of $\ICLex$. 
	
	As for the mutation operator, it randomly acts on the elements of a genotype, here the clusters of a partition. Together with the recombination operator, it allows introducing variability in the algorithm allowing for a better exploration. Again, a desirable property is the refinement of a given partition, and a natural mutation to consider is to split a cluster in two new ones at random. Then, local searches consisting in swaps and merges can either undo a poor split or explore new directions. The resulting hybrid greedy algorithm is represented as pseudo-code in Algorithm~\ref{alg:hybrid1}.
	
	\begin{algorithm}[hb!]
		\KwData{population size: $V$, probability of mutation: $pm$, maximum number of generations: $maxgen$, dataset $\Obs$ }
		\KwResult{a partition $\mathcal{P}^*$ } 
		\BlankLine
		Build a population $G=\{\mathcal{P}^1,...,\mathcal{P}^V\}$ of initial solutions using $V$ random initializations and greedy swaps.\\
		$nbgen \leftarrow 1$ \\
		\While{$nbgen < maxgen$ }{
			\BlankLine
			add the best solution $\mathcal{P}^*$ in the population to the new generation $G_n = \{\mathcal{P}^*\}$\\
			sample according to their rank in terms of $\ICL$, $(V - 1)$ pairs of solution in $G$\\ 
			\For{each sampled pairs $(\mathcal{P}^1,\mathcal{P}^2)$ of partitions}{
				build the cross partition $\mathcal{P}$ of $\mathcal{P}^1$ and $\mathcal{P}^2$\\
				$\mathcal{P} = \mathcal{P}^1 \times \mathcal{P}^2$\\
				update $\mathcal{P}$ using greedy merge\\
				\If{$random < pm$}{
					sample a cluster of $\mathcal{P}$ and split it randomly in two
				}
				update $\mathcal{P}$ using greedy swap\\
				add $\mathcal{P}$ to the new generation $G_n = \{G_n,\mathcal{P}\}$
			}
			replace the population by the new generation  $G = G_n$ \\
			$nbgen\leftarrow nbgen+1$
		}
		
		return the best solution $\mathcal{P}^*$ of $G_n$
		\caption{Hybrid genetic algorithm}
		\label{alg:hybrid1}
	\end{algorithm}

	\paragraph{Computational efficiency}
	
	From a computational perspective, the crossover and mutation operator can easily be parallelized since they are independent for each pair of solutions to combine. In addition, while already efficient, this first version was optimized by taking advantage of a special feature of the problem. Indeed, after having formed the crossed partition, one may determine the pairs of clusters $(k,l)$ that have a common parent either in $\mathcal{P}^1$ or $\mathcal{P}^2$: 
	\[ \left\{ (\bC_k, \bC_l) \in \left(\Partition_1 \times \Partition_2 \right)^2: \exists \bC \in \mathcal{P}^1 \cup \mathcal{P}^2, (\bC_k \cap \bC \neq\emptyset) \text{ and } (\bC_l \cap \bC \neq\emptyset) \right\}, \]
	only allowing merge and swap movements between them. This allows gaining a factor $\K$, which can be interesting for a large number of clusters, especially in the first iterations of the algorithm. The rationale behind this restriction is that both initial partitions may be recovered if needed, while the inspection of a non-negligible quantity of merge and swap moves having a low chance of being relevant can be avoided. Moreover, the computational cost of a swap or a merge is model dependent, although the $\ICLex$ expressions for the considered models generally allows for efficient formulas. For example, in the case of the binary SBM, \textcite[Appendix B and C]{Come2015} derived swaps and merge updates with $\mathcal{O}(l + \K^2)$ and $\mathcal{O}(\K)$ costs respectively, where $l$ is the average number of edges per node.
	
	\paragraph{Setting genetic hyperparameters} The size of the population of solutions $V$, the probability of mutation and the maximum number of generations are hyperparameters to be set beforehand. They are used to tune the trade-off between computational efficiency and exploration capacity. The hybridization with local maximization methods enhances the exploration capability, therefore reducing the need for large values of these parameters. In \Cref{sec:exp} and in the package, we typically set the population size around $50$, the probability of mutation to $0.25$ and the maximum number of generations to $10$. 
	
	\paragraph{}
	This hybrid genetic algorithm allows the extraction of a natural clustering when the number of clusters is unknown, by carefully exploring the space of partitions and exploiting relevant solutions. The trade-off between the two is controlled by a few tuning parameters, namely the population size and the probability of mutation, and the computational complexity, which is model-dependent, is competitive with other approaches. The experiments carried in Section~\ref{sec:exp} will demonstrate its performances in real and simulated settings. \revision{In practice, relying on a greedy optimization algorithm and allowing merges between clusters leaves a solution with $K>1$ clusters. In the following, we propose a regularization strategy based on the parameter $\alpha$ to allow extra merges and to build a complete hierarchy of clusters up to $K=1$. We emphasize that $\alpha$ is key to unfold complete hierarchies.} 

	\section{Hierarchical extension from regularization path}
	\label{sec:Hierarchical}
So far, we have seen how to maximize the $\ICLex$ with respect to $\Clust$, the prior hyper-parameters being kept fixed, leaving a clustering result at a given level $\K$. We now consider the problem of building a complete hierarchy of clusters using the same criterion. In this section, we introduce the second contribution of this paper: a greedy agglomerative algorithm for hierarchical clustering, based on an approximation of $\ICLex$. Hereafter, $\ICLex$ is viewed not only as a function of the partition $\Clust$ but also of the hyper-parameter $\alpha$. The asymptotic behavior of the $\log$-Gamma function near $0$ is used to derive a simple functional form for the criterion as a function of $\alpha$. The resulting approximation is called $\ICLlin$ due to its log-linear dependency in $\alpha$.  Then, $\alpha$ is used as a  regularization parameter which unlocks access to simpler, coarser, solutions. The algorithm produces a hierarchy of nested partitions along with the sequence of the regularization parameters which enabled the fusions : $( \Clust^{(k)}, \,\alpha^{(k)}  )_{k=K,\ldots,1}$. Eventually, the extracted partitions may be investigated, and the hierarchical structure employed to get a pseudo-ordering of the initial clusters to enhance the graphical representation of the clustering results. \revision{As we shall see, the key advantage of $\ICLlin$ compared to $\ICLex$ is that it allows to obtain explicit values for $\alpha$ allowing merges without having to rely on prohibitive non-linear equations resolution or grid search strategies.}
	
	\subsection{A new approximation for the exact ICL}
	As shown in Equation \eqref{eq:ICLexdec}, $\ICLex$ decomposes as the sum of two terms. The first one is $\log \p(\Obs \mid \Clust,  \bBeta)$, the conditional integrated $\log$-likelihood of the data, given the partition $\Clust$. In the following, it will be denoted by $D(\Clust)$. This quantity only depends on the observed data $\Obs$, the partition $\Clust$, and the model specification. The second term is the integrated $\log$-likelihood of $\Clust$ and depends on the Dirichlet hyper-parameter $\alpha$:
\begin{align}
\label{eq:logpZterm}
\log \p(\Clust \mid \alpha , K) = \log\Gamma(\alpha \K) + \sum_{k=1}^K \log \Gamma(\alpha + n_k) - K \log \Gamma(\alpha) - \log \Gamma(\nb + \alpha\, K) .
\end{align}
Here, the dependency between $\K$ and $\Clust$ is made explicit, the former representing the number of clusters in the latter. Then, we consider the asymptotic behavior of the expression above when $\alpha$ becomes small. First, recall that the $\log$-Gamma function behaves as minus the natural logarithm near $0$:
\begin{equation}
\label{eq:approxGamma}
\log \Gamma(\alpha) = \log(\alpha^{-1} \,\Gamma(\alpha + 1)) \; \approx_{0} \; - \log(\alpha) \,.
\end{equation}
Then, considering $\K$ bounded, we can use this approximation on $\log \Gamma(\alpha)$ and $\log \Gamma(\alpha \K)$ respectively.  Finally, we use the two additional approximations $\log \Gamma (n_k + \alpha) \approx \log \Gamma(n_k)$ and $\log \Gamma(\nb + \alpha \K) \approx \log \Gamma(\nb)$ when alpha is close to $0$. Combining these approximations, a simpler expression of \Cref{eq:logpZterm} as a $\log$-linear function of $\alpha$, can be derived:
\begin{align}
\log \p(\Clust \mid \alpha , K)  & \approx_0 (K-1)\log(\alpha)-\log(K)+\sum_{k=1}^K \log \Gamma(n_k) - \log\Gamma(\nb) \,. \nonumber\label{eq:CondLlhoodApprox}
\end{align}
The algorithm introduced in this paper relies on this approximation, and the corresponding criterion is named $\ICLlin$, where \textit{lin} stands for linear:
\begin{equation}
\label{eq:ICLlin}
\ICLlin(\Clust, \alpha) =   D(\Clust) + (K-1)\log(\alpha) -\log(K) + \sum_{k=1}^K \log \Gamma(n_k) - \log \Gamma(\nb)\nonumber \, .
\end{equation}
All quantities that do not depend on $\alpha$ may be grouped in an intercept:
\begin{equation}
\label{eq:intercept}
I(\Clust) \coloneqq  D(\Clust) - \log(K) + \sum_{k=1}^K \log \Gamma(n_k) - \log \Gamma(\nb) \, .
\end{equation}
\revision{Note that the $\Gamma(n_k)$ term is always well-defined here, since we do not consider empty clusters.} Then, the $\log$-linearity of the new criterion appears explicitly:
\begin{equation}
\label{eq:ICLlinclear}
\ICLlin(\Clust, \alpha) = (K-1) \log(\alpha) + I(\Clust) \,.
\end{equation}
Naturally, the quality of this approximation depends on how small both $\alpha$ and $\alpha \K$ are. For the first one, the approximation of \Cref{eq:approxGamma} is quite mild, even for standard $\alpha$ values such as $1$ or $\frac{1}{2}$. As for $\alpha \K$, while its value may be relatively far from $0$ for $\alpha = 1$, we verify in practice that $\alpha$ rapidly decreases several orders of magnitude below $1$ as of the first fusion. This ensures that the approximation is correct throughout the procedure.

	\subsection{Hierarchy construction}

	Looking at the functional form of the previous approximation, a natural goal is to search for the Pareto front in the $(\log \alpha, \,\ICLlin(\Clust,\alpha))$ plane. The latter corresponds to a set of dominating partitions with respect to $\ICLlin$, for a certain range of $\alpha$ values in $]0,1]$, or equivalently for a range of $\log(\alpha)$ values in $]-\infty,0]$. Formally, we define the Pareto front as: 
	\begin{align}
	\label{eq:pareto}
	P  = \{ (\Clust^\star, I^\star_{\alpha}) \; : \; \forall \alpha \in I^\star_{\alpha}, \, \forall \Clust \neq \Clust^\star, \, \ICLlin(\Clust^\star,\alpha) \geq \ICLlin(\Clust,\alpha) \},  
	\end{align}
	where $I^\star_{\alpha}$ are intervals of $]0,1]$. Finding this set of dominating partitions and ranges is not a trivial task. However, the difficulty is reduced if we consider a dominant partition $\Clust$ for certain level $\alpha$, and restrict ourselves to look for partitions that results from merges of $\Clust$. Indeed, we will show that it is direct, for a given partition $\Clust$, to find the hyper-parameter $\alpha^\star$ and the pair $(g^*,h^*)$ of clusters to merge, such that the obtained coarser partition $\Clust_{g^*\cup h^*}$ will dominate $\Clust$, along with any other partition $\Clust_{g \cup h}$, over $]0,\alpha^*]$. Starting from an initial clustering $\Clust^{(K)}$, these locally optimal merges can be used to build a heuristic, in the spirit of hierarchical agglomerative clustering, that will extract a sequence of nested partitions to approximate the Pareto front defined by Equation \eqref{eq:pareto}. While this heuristic is not guaranteed to extract the Pareto front, it may still provide good results, especially starting from a dominant partition, \textit{e.g.} obtained by maximizing $\ICLex(\Clust,1)$ with the hybrid optimization algorithm introduced in the previous section. Intuitively, if a partition $\Clust$ is locally dominant for some $\alpha$ value, there is a good chance that the next dominant partition for some $\alpha'<\alpha$ will be a coarse version of $\Clust$. Indeed, to surpass a dominant solution in $\alpha'$, the new dominant solution must be coarser in order to benefit from a reduced decreasing slope, while it must also have a high intercept $I(\Clust')$. Solutions built from merging two clusters of $\Clust$ are coarser, therefore fulfilling the first requirement. Moreover, since $\Clust$ is already dominant, we may also hope that a coarser version of it also has a high intercept, and therefore dominates other partitions for this new $\alpha'$ value. Let us therefore detail this heuristic, and the conditions under which a fusion opportunity exists. 
	
	\subsubsection{Fusion opportunity}
	\label{sub:mergemovement}
	
	For any given partition $\Clust^{(k)}$, with $k \geq 2$ clusters, let us define $\mathbb{Z}^{(k-1)}$ as the space of all the partitions with $(k-1)$ clusters that are coarser than $\Clust^{(k)}$:
	\[
	\mathbb{Z}^{(k-1)} = \left\{ \Clust_{g \cup h} \, : \text{the partition } \Clust^{(k)} \text{ with clusters g and h merged, }  g \neq h \right\} \,.
	\] 
	Note that we will use the terminology \textit{mother} partition for $\Clust^{(k)}$  and \textit{child} partition for any element of $\mathbb{Z}^{(k-1)}$. 
	
	As pointed out previously, with $\Clust^{(k)}$ fixed, the function  $\ICLlin(\Clust^{(k)}, \, \cdot)$ is $\log$-linear with slope $(k - 1)$ and intercept $I(\Clust^{(k)})$. This implies that the slope of the $\ICLlin$ functions decreases incrementally to $0$ as $k$ decreases to $1$. Figure~\ref{fig:droitesICL} illustrates this behavior of $\ICLlin$, with respect to the number of clusters $k$. It can easily be seen that the slopes decrease until $k$ reaches 1 which corresponds to an horizontal line.
	\begin{figure}[!ht]
		\centering
		\includegraphics[width=0.75\textwidth]{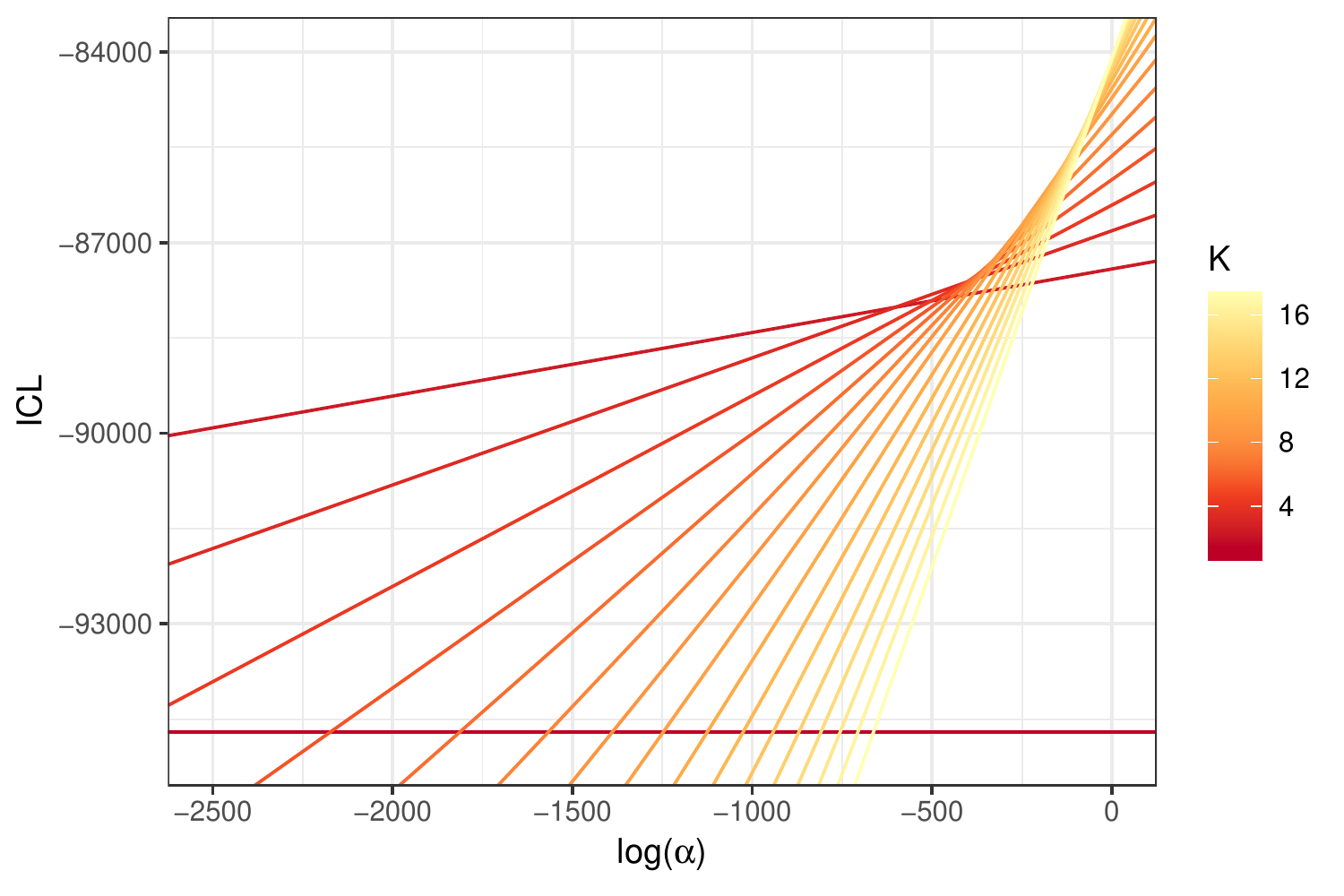}
		\caption{Lines of slope $k-1$ representing the functions $\log\alpha \mapsto ICL(\Clust^{(k)}, \, \log\, \alpha)$ for a collection of partitions $\Clust^{(k)}$ with a decreasing number of clusters $k=21, \ldots, 1$. We see that the $\ICLlin$ order changes as $\alpha$ decreases, favoring coarser partitions. The x-axis slice at $\log \alpha=0$ corresponds to the intercepts $I(\Clust^{(k)})$.}
		\label{fig:droitesICL}
	\end{figure}
	
	\revision{From Equation \eqref{eq:ICLlinclear}, we are able to derive the expression of the variation of the $\ICLlin$ between a mother partition $\Clust^{(k)}$ and any of its child $\Clust_{g \cup h}$ as a function of $\alpha$. Graphically, this variation, denoted as $\Delta_{g\cup h}(\alpha)$, is the difference between two straight lines of slope $k - 2$ and $k - 1$ respectively. Moreover, the dominance shifting point corresponds to its zero, which can be easily derived and will be denoted by $\alpha_{g,h}$}:
	\begin{equation}
	\label{eq:deltaZero}
	\begin{aligned}
		\Delta_{g\cup h}(\alpha_{g,h}) = 0 & \iff \ICLlin\left(\Clust_{g \cup h}, \alpha_{g,\, h}\right) - \ICLlin\left(\Clust^{(k)}, \alpha_{g, h}\right) =0\, , \\
 & \iff  \log(\alpha_{g, h}) \coloneqq I(\Clust_{g \cup h}) - I(\Clust^{(k)})  \,. 
		\end{aligned}
	\end{equation}
	In geometric terms, we know that below this level, the child partition $\Clust_{g \cup h}$ dominates its mother $\Clust^{(k)}$ in terms of $\ICLlin$. Thus, for any mother partition $\Clust^{(k)}$, we are able to compute the tipping points $(\alpha_{g,h})_{g<h}$ for the $k (k - 1)/2$ possible child partitions. To find the best fusion, we recall the form of the $\ICLlin$ for any partition $\Clust_{g\cup h} \in \mathbb{Z}^{(k-1)}$ from Equation \eqref{eq:ICLlinclear}:
	\[
	\ICLlin(\Clust_{g \cup h}, \alpha) = (k-2)\log(\alpha) \; + \; I(\Clust_{g\cup h}) \, , \quad \forall g,h \, .
	\]
	So it is clear that, viewed as functions of $\log  \alpha $, the $\ICLlin$ of all child partitions in $\mathbb{Z}^{(k-1)}$ are parallel straight lines of slopes $(k-2)$, only differing by their intercepts. This guarantees us that there exists a unique partition, uniformly dominating in $\alpha$, in $\mathbb{Z}^{(k-1)}$. Formally:
	\begin{align}
	\exists ! \Clust_{g^{\star} \cup\, h^{\star}} \in \mathbb{Z}^{(k-1)}  \text{ s.t. : } \forall \alpha > 0 , \, \forall \Clust_{g \cup h} \in \mathbb{Z}^{(k-1)} \nonumber \\
	\ICLlin\left( \Clust_{g^{\star} \cup\, h^{\star}}, \alpha \right) &\geq \ICLlin\left(  \Clust_{g \cup h}, \alpha \right) \, . \label{eq:uniqueZ}
	\end{align}
	This partition corresponds to the one with the greatest intercept which, by Equation \eqref{eq:deltaZero}, also happens to be the one intersecting with $\Clust^{(k)}$ at the greatest $\alpha_{g,h}$: \\
	\[
	(g^{\star}, h^{\star}) = \arg\max\limits_{g,h} I(\Clust_{g\cup h}) = \arg\max\limits_{g,h} I(\Clust_{g\cup h}) - I(\Clust^{(k)}) = \arg\max\limits_{g,h} \alpha_{g,h}.
	\]
	This discussion describes how to find the best fusion, going from a partition $\Clust^{(k)}$ to $\Clust^{(k-1)} = \Clust_{g^{\star} \cup\, h^{\star}}$ by setting $\alpha^{(k-1)} = \alpha_{g^{\star}, h^{\star}}$. Taking this greedy approach, one may perform such locally optimal merges sequentially in a fast and efficient bottom-up procedure until all clusters have been merged into a unique cluster. Hence, we can see how $\alpha$ acts as a regularization parameter, enabling for fusions. Taking an initial partition $\Clust^{(K)}$ and a given initial $\alpha^{(K)}$, typically~$1$, this will provide a set of nested clustering solutions $(\Clust^{(k)}, \alpha^{(k)})_{k=K,\ldots,1}$. 
	
	Finally, the log-scale used for the $x$-axis of \Cref{fig:droitesICL} highlights that $\alpha$ quickly decreases toward $0$ from the first iteration, thus insuring the validity of the approximation at the first stages. This fact is also empirically verified in all the experiments of \Cref{sec:exp}, and is easily visualized via the dendrogram representation introduced below.
	
	\subsubsection{Post-processing}
	
	The previous strategy outputs a hierarchy, meaning a set of nested clustering with a number of clusters ranging from $K$ to $1$. Each merge performed by the algorithm is stored into a binary tree, keeping track of the hierarchical relations between clusters. However, one important point to observe is that some of the partitions extracted by this agglomerative greedy algorithm may not be dominant anywhere in $\alpha\in]0,1]$, with respect to the others. This corresponds to situations where combining several merges in one step is better than performing them sequentially. Indeed, in geometrical terms, there is no guarantee that the intersection between the $\ICLlin$ of $\Clust^{(k)}$ and $\Clust^{(k-1)}$ is at a greater $\alpha$ than between $\Clust^{(k)}$ and $\Clust^{(k-2)}$. Or, equivalently, there is no guarantee that the sequence $(\alpha^{(k)})_k$ is non-increasing. This is quite natural since $\ICLlin$ is a penalized criterion, thus it does not necessarily increase with the model complexity. Since such partitions cannot belong to the approximated Pareto front, we propose to remove them.  Indeed, they are easy to track since they correspond to merge $k$ where $\alpha^{(k-1)} > \alpha^{(k)}$. Then, having extracted the $F \leq K$ dominating partitions, it is possible to recompute the $\alpha_f$ where they cross each other to get a sequence $(\Clust_f,\alpha_f)_{f=F,...,1}$ with a non-increasing sequence $(\alpha_f)_f$. Although the index of $\Clust_f$ does not indicate its number of clusters anymore, the sequence still consists in a hierarchy of nested partitions, which are now ordered in terms of $\ICLlin$ in their ranges of dominance: $\ICLlin(\Clust_f,\alpha) > \ICLlin(\Clust_{l},\alpha),\,\forall\,f \neq l,\,\forall \alpha \in[\alpha_{f-1},\alpha_{f}]$. Figure \ref{fig:front} illustrates this post-processing, where the $\ICLlin$ lines associated with each $\Clust^{(k)}$ extracted by the greedy agglomerative algorithm are depicted with their corresponding dominance ranges, and the nowhere dominant partitions are highlighted.
	\begin{figure}[!ht]
		\centering
		\includegraphics[width=0.75\textwidth]{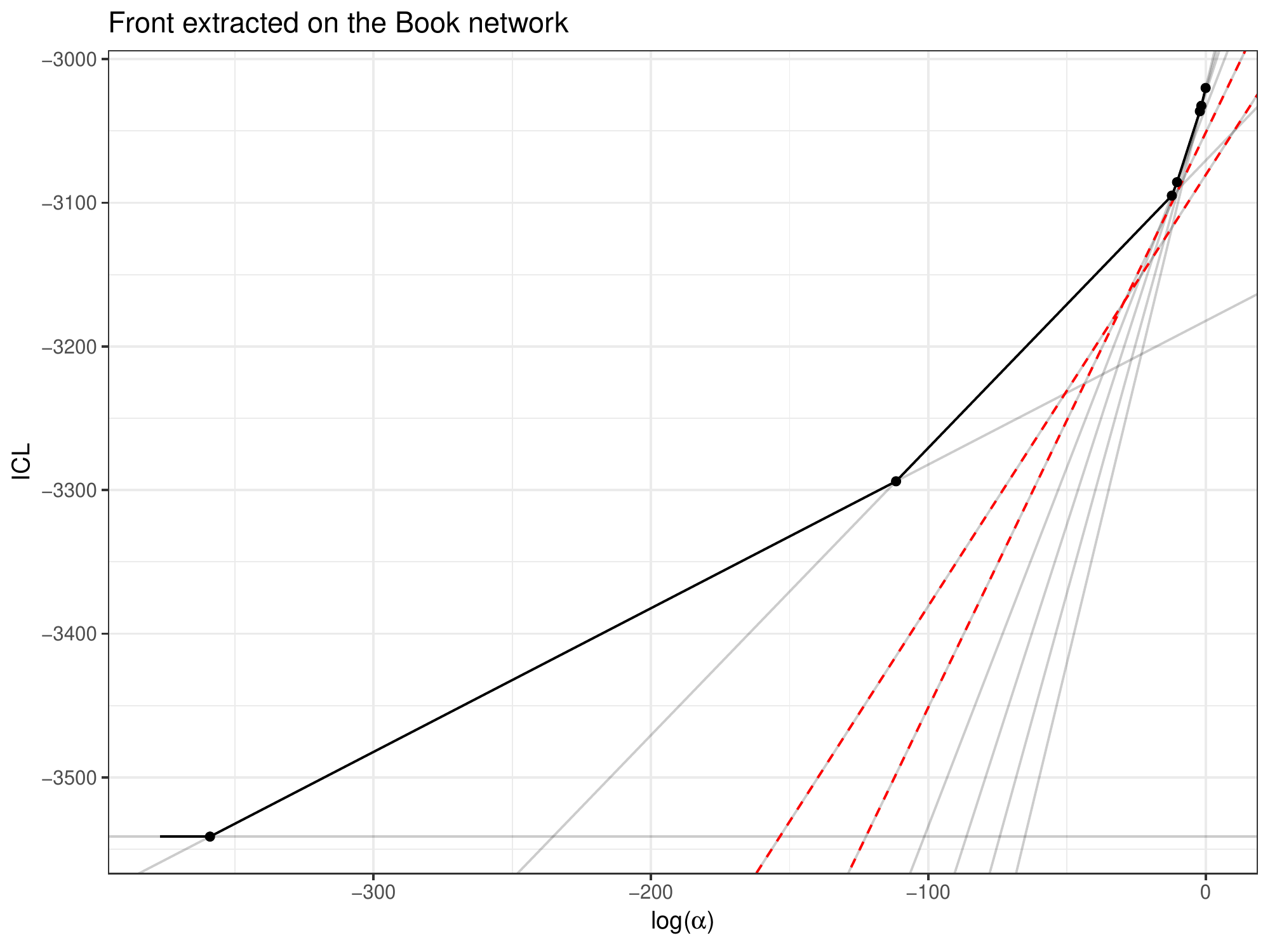}
		\caption{$\ICLlin(\Clust,\alpha)$ as a function of $\log(\alpha)$ for every partition extracted by the greedy hierarchical algorithm on the \textbf{Books} co-purchasing network (see Section~\ref{sub:RealDataHybrid} for dataset details), with a degree corrected SBM model. The partitions that do not have any range of dominance are highlighted with dashed red lines, and the dominant ranges with solid black lines. The intersection between dominant partitions correspond to the recomputed tipping, dominance shifting points $(\alpha_f)_f$. The initial partition $\Clust^{(K)}$ was built using Algorithm \ref{alg:hybrid1}.}
		\label{fig:front}
	\end{figure}
	
	\subsubsection{Visualization}
	Along with its property discussed above, the proposed algorithm possesses interesting graphical features for the visualization of both the hierarchy, with a dendrogram, as well as the initial clustering $\Clust^{(K)}$ using the partial ordering of the leaves.
	
	\paragraph{Dendrogram} The sequence $(\alpha_{f})_{f=\, F,\,  \cdots \, ,\, 1}$ may be used for the construction of a dendrogram representing the cluster merge tree from $\Clust^{(K)}$ to $\Clust^{(1)}$, with the non-increasing sequence $(-\log(\alpha_f))_f$ in the $y$-axis. Thus, the hierarchical structures of the clusters can be visualized as well as the amount of regularization needed for each fusion(s). Indeed, as discussed above, the $y$-axis can then be seen as the drop in $\ICLlin$ induced by each merge, acting as an analog of the traditional \textit{dissimilarity} in agglomerative strategies. Figure \ref{fig:dendo} presents the obtained dendrogram for the Book network of Section \ref{sub:RealDataHybrid}. 
	\begin{figure}[!ht]
		\centering
		\includegraphics[width=0.75\textwidth]{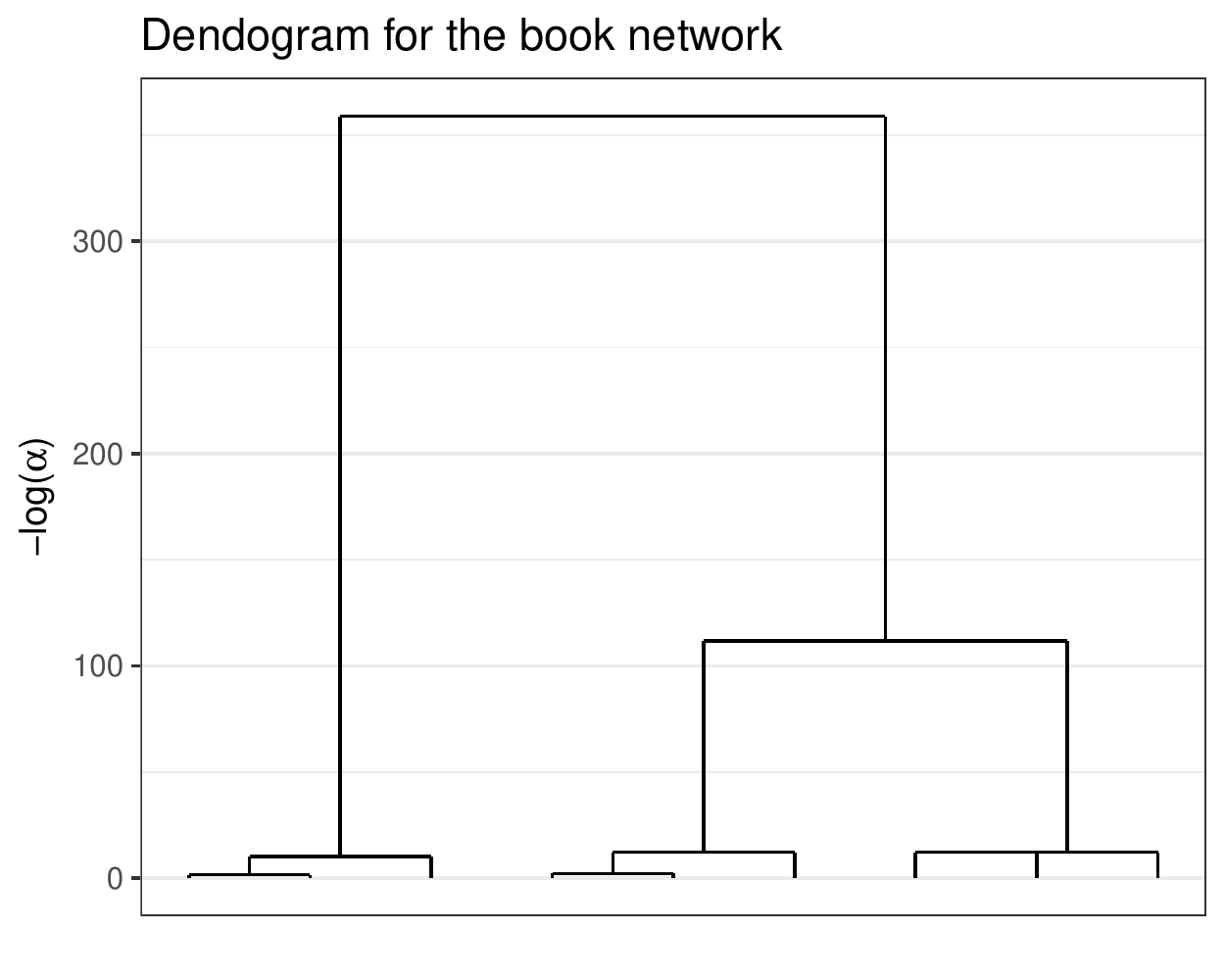}
		\caption{Dendrogram representation of the extracted hierarchy for the \textbf{Books} co-purchasing network (see Section~\ref{sub:RealDataHybrid} for dataset details).}
		\label{fig:dendo}
	\end{figure}
	
	\paragraph{Leaves ordering} Another interesting feature of the proposed procedure is the partial ordering of the initial clustering $\Clust^{(K)}$ that can be obtained from the merge tree structure. Indeed, for a binary tree with $\K$ leaves, there are $2^{K-1}$ permutations of its leaves that are compatible with its structure. In other words, there are $2^{K-1}$ possible dendrograms representing the same hierarchy. However, some are more relevant than others and we seek to find the optimal tree consistent ordering (or permutation) $\sigma$ that minimizes the sum of merge costs between successive clusters at $\alpha=1$:
	\begin{equation}
	\sigma =\arg\min_\sigma \sum_{k=1}^{K-1}\Delta_{\sigma(k) \cup \sigma(k+1)}.
	\end{equation}
	An efficient algorithm based on dynamic programming \parencite{Bar2001} is already available to solve this optimization problem using the binary tree structure. As shown in Figure \ref{fig:motiv}, such ordering of the initial clusters may be used advantageously to draw node-link diagrams or block adjacency matrix, enhancing visualization and simplifying the interpretation of the clustering results. This approach is used in the \pkg{greed} package to provide the final ordering of the clusters.

	\section{Numerical experiments}
	\label{sec:exp}
    Thus far, the discussion has been purposely general in order to express the generic aspect of the proposed two-fold methodology. We now illustrate
	the behavior of the two algorithms in simulated and real settings for several particular instances of DLVMs. First, the hybrid optimization algorithm is compared with other clustering algorithms in simulated scenario for count-data and network clustering. The results highlight both its improvement of hill climbing heuristics, as well as its advantage compared to standard statistical approaches. The results of the hierarchical algorithm is then analyzed on several real datasets for graph clustering and co-clustering, demonstrating its interest in finding relevant hierarchical structures. The formula of $\ICLex$ are derived in \cref{sec:models} for each specific model.
	
	\subsection{Medium-scale SBM simulations}
	
	To investigate the performances of the hybrid algorithm, we pursue with our motivating example defined in Section~\ref{HCICL:intro}. The simulation consists of a SBM graph with 1500 nodes and 15 clusters with a hierarchical structure of 3 clusters each divided into 5 small clusters. Figure~\ref{fig:sbmsim1} (left) presents the evolution of the $\ICL$ criterion among the different generations of solutions built by the algorithm. As clearly shown by this figure, the criterion improves at each generation until it reaches a plateau around the fourth generation. 
	\begin{figure}[!ht]
		\centering
		\includegraphics[width=0.95\textwidth]{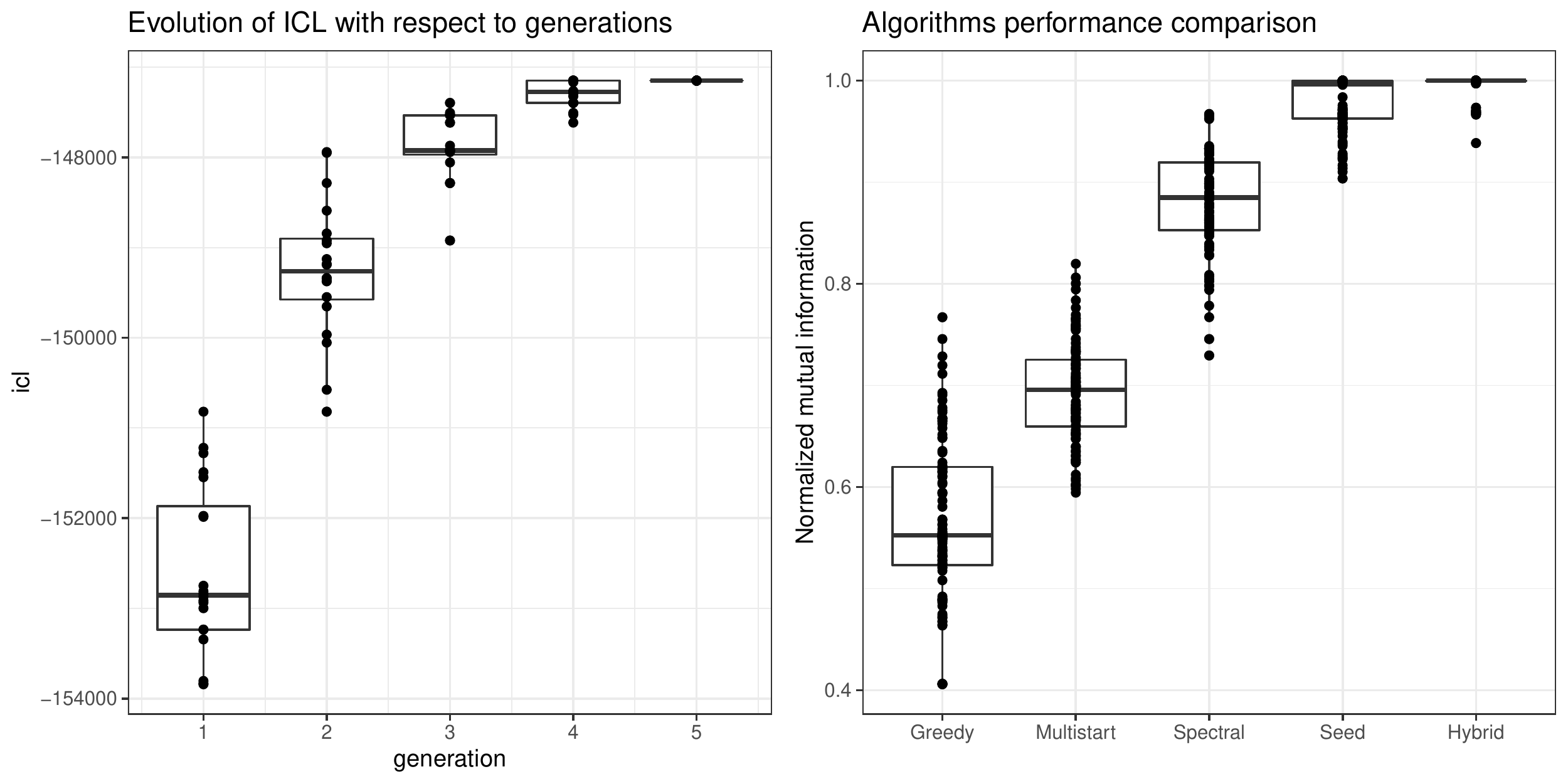}
		\caption{Evolution of $\ICLex$ with respect to the generation for one run of the hybrid algorithm (left), NMI between simulated and reconstructed clusters for one hundred simulations for the different algorithms (right).}
		\label{fig:sbmsim1}
	\end{figure}
	A comparison of the hybrid algorithm with other methods is also performed on the same problem by running the different algorithms with one hundred simulated graphs. The hybrid algorithm is compared with a greedy algorithm with random starting point, a greedy algorithm with multiple random starting partitions, a regularized spectral algorithm \parencite{Qin2013}, and a greedy algorithm initialized with the spectral algorithm. All the methods include model selection in their core, except for the spectral clustering which is thus advantageously run with the true number of clusters. For all the variants of the greedy algorithm and our hybrid proposal default values were used for their parameters: initial number of clusters equal to twenty, size of the population equal to fifty, probability of mutation equal to 0.25 and maximum number of generations fixed to ten.
	The comparison is made in terms of normalized mutual information \parencite[NMI,][]{vinh2010information} between the extracted and simulated clusters. The NMI allows comparing partitions with a different number of clusters, as is needed in this setting, and an NMI of 1 means a perfect match between two partitions. As expected, the greedy algorithm with random starting point suffers from quite severe under-fitting and gives an NMI around 0.55, using multistart helps a little and the solutions then are around an NMI of 0.7. The spectral algorithm does also improve with an NMI around 0.85. Eventually, the two best algorithms are the simple greedy algorithm carefully initialized (here using the results of the spectral algorithm with twenty clusters) and our proposed hybrid algorithm which recovers almost perfectly the simulated partitions in all of the simulations (93\% of perfect recovery) whereas some simulations are still not perfectly recovered by the greedy algorithm with careful initialization (51\% of perfect recovery).  
	
	\subsection{Medium-scale mixture of multinomials simulations}
	
	As a second scenario, we focused on a mixture of multinomials. The simulation setup was as follows: 15 clusters with equal proportions were generated. The sample size was fixed to 500 and the number of possible outcomes for the multinomials to 100. The multinomial parameters were set such that each cluster has a uniform distribution on $\{1, \ldots, 100\}$ except for 10 randomly chosen outcomes that have their probabilities multiplied by 4. Eventually the number of draws for each multinomial sample was set to 50. The simulation was performed one hundred times and for each generated dataset the solutions found by the different variants of the greedy heuristic, an EM algorithm (from the \pkg{mixtools} R package) with model selection performed with AIC and BIC were recorded. We may first look at the number of clusters extracted by each algorithm. Figure \ref{fig:mmsim1} presents the bar graphs of the number of extracted clusters for each of the algorithms over the 100 generated datasets.
	
	\begin{figure}[!htp]
		\centering
		\includegraphics[width=0.95\textwidth]{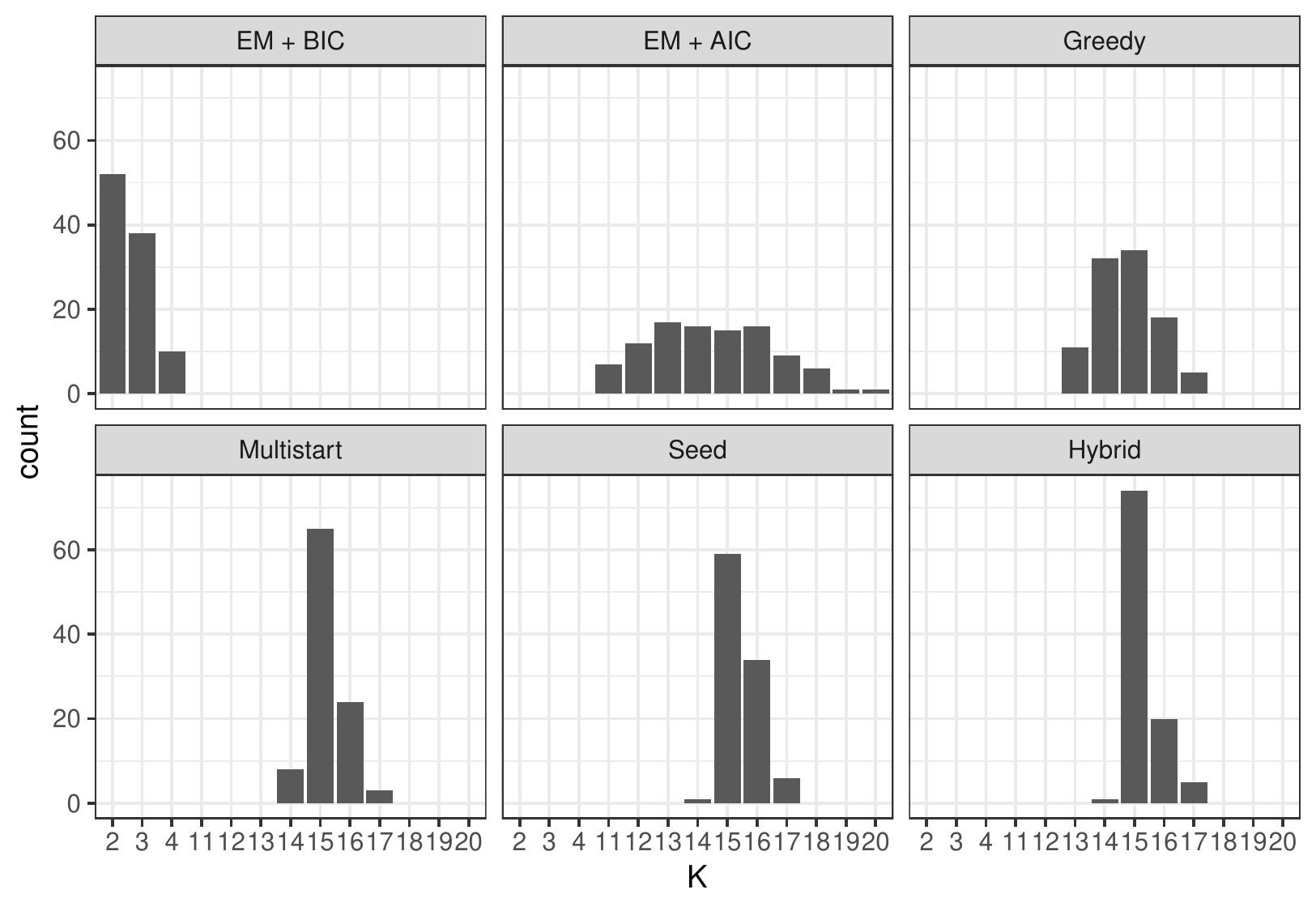}
		\caption{Bar graphs of the number of  extracted clusters over one hundred simulated datasets for the different algorithms. The datasets were generated with $K=15$.}
		\label{fig:mmsim1}
	\end{figure}
	
	The solutions found using an EM algorithm and BIC or AIC for model selection suffer from a lot of variance. AIC gives more satisfactory results on this problem but the number of extracted clusters is still quite variable, between 10 and 22. BIC leads to too simple models with fewer than 5 clusters in all the simulations. Some of these results can be explained by the random initialization of the EM algorithm. Greedy maximization of $\ICL$ gives better results in this problem and found the correct number of clusters in around 60\% of the simulations with the multistart version of the algorithm (which is a little bit better than the version seeded with a simple $k$-means). Eventually, the hybrid algorithm found the correct number of clusters in more than 75\% of the simulations and is therefore also better here.
	If we inspect the results with respect to the NMI with the simulated labels, or with the obtained ICL values as shown in Figure \ref{fig:mmsim2}, the ranking of the different solutions does not differ. The hybrid algorithm leads to the best results even though the differences with the seeded version of the greedy algorithm are less important with respect to these metrics in this experiment.
	
	\begin{figure}[!htp]
		\centering
		\includegraphics[width=0.95\textwidth]{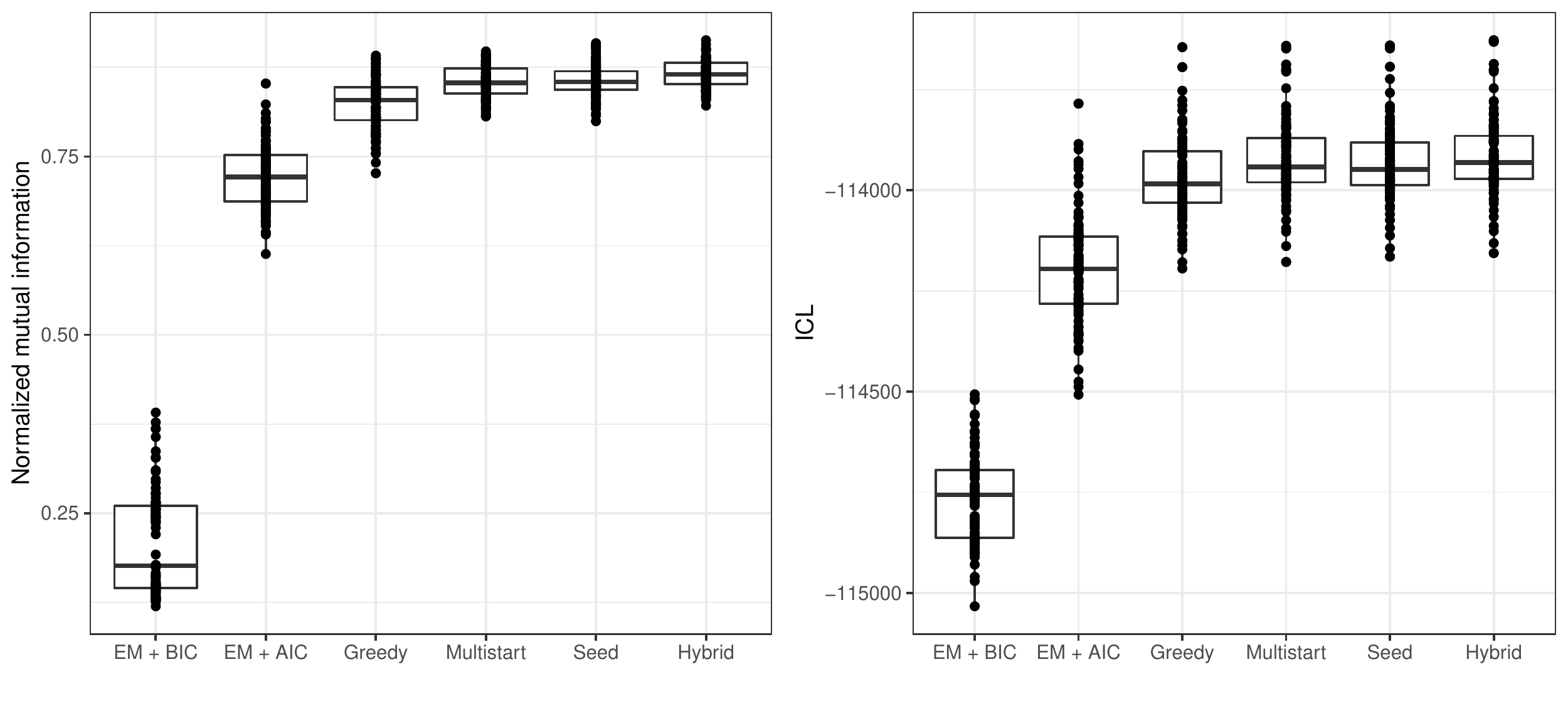}
		\caption{NMI between simulated and extracted clusters and ICL for the different algorithms on the mixture of multinomial simulation over one hundred simulations.}
		\label{fig:mmsim2}
	\end{figure}
	
	\subsection{Clustering real network data}
	\label{sub:RealDataHybrid}
	
	The performances of the proposed solution were also investigated with real datasets. Classical graph clustering datasets were first analyzed:
	\begin{itemize}
		\item\textbf{Blog}: a directed network from \textcite{Adamic2005} of hyperlinks between 1222 blogs on US politics, recorded during the 2004 presidential election,
		\item\textbf{Books}: a network of 105 books about US politics also published around the time of the 2004 presidential election and sold by the online bookseller Amazon.com (edges between books represent frequent co-purchasing of books by the same buyers),
		\item\textbf{Jazz}: an undirected network of 198 jazz bands \parencite{Gleiser2003},
		\item\textbf{Football}: an undirected network of American football games between 115 colleges during the regular Fall 2000 season \parencite{Newman2004}.
	\end{itemize}
	All of these classical datasets were downloaded from Mark Newman datasets page\footnote{available at \href{http://www-personal.umich.edu/~mejn/netdata/}{http://www-personal.umich.edu/~mejn/netdata/}}. Two co-clustering datasets were also benchmarked: 
	\begin{itemize}
		\item \textbf{French parliament}: this dataset concerns the votes of 593 French deputies during a part of the current legislature and covers 1839 ballots, the data were extracted from the French national assembly open data api\footnote{available at \href{http://data.assemblee-nationale.fr/}{http://data.assemblee-nationale.fr/}} and gathered into a binary matrix where the presence of a one indicates a positive vote of a deputy for a specific ballot. \item \textbf{Jazz bands / musicians}: is a recreation of the raw data in \textcite{Gleiser2003}. These raw data were extracted by scrapping the same source namely \textit{The Red Hot Jazz Archive}\footnote{available at \href{http://www.redhotjazz.com/}{http://www.redhotjazz.com/}}. For each available band, the list of its members was extracted leading to a binary matrix of 4475 musicians and 965 bands. For all the performed analyses,  we removed all the musicians that played in fewer than 3 bands and all the bands with fewer than 3 musicians, leaving a final matrix of 690 musicians and 539 bands.
	\end{itemize}
	
	These original datasets were produced for this paper and are available together with the classical network datasets in the \pkg{R} package 
	accompanying the paper. For each of these datasets, and in order to get some information on the variability of results, we ran the algorithms 25 times, with a degree corrected SBM (dc-SBM) model for networks and degree corrected LBM (dc-LBM) model for co-clustering datasets, and the resulting $\ICLex$ values were recorded. The algorithms are the same as previously: greedy with multiple random starts, seeded greedy (spectral algorithm for dc-SBM and independent $k$-means on rows and columns for dc-LBM) and our proposed hybrid approach. To study the impact of the population size on the results of the hybrid algorithm, this parameter was also set to vary in $\{20,40,80\}$. These numbers are quite small with respect to the ones commonly encountered in pure GA, which is allowed by the use of hybridization with local search reducing the need for a large population.

	\begin{figure}[!htp]
		\centering
		\begin{tabular}{cc}
			\includegraphics[width=0.49\textwidth]{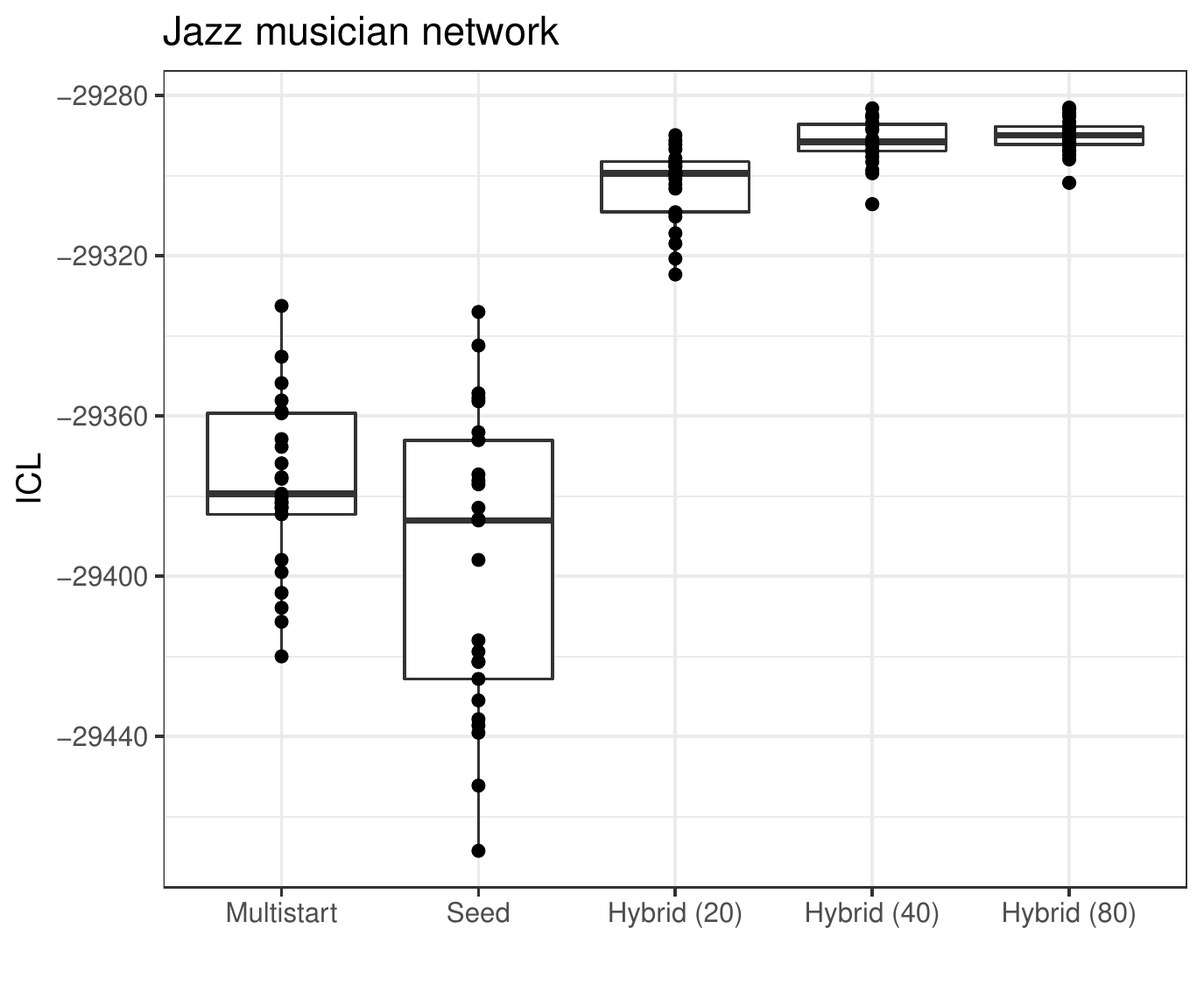}&\includegraphics[width=0.49\textwidth]{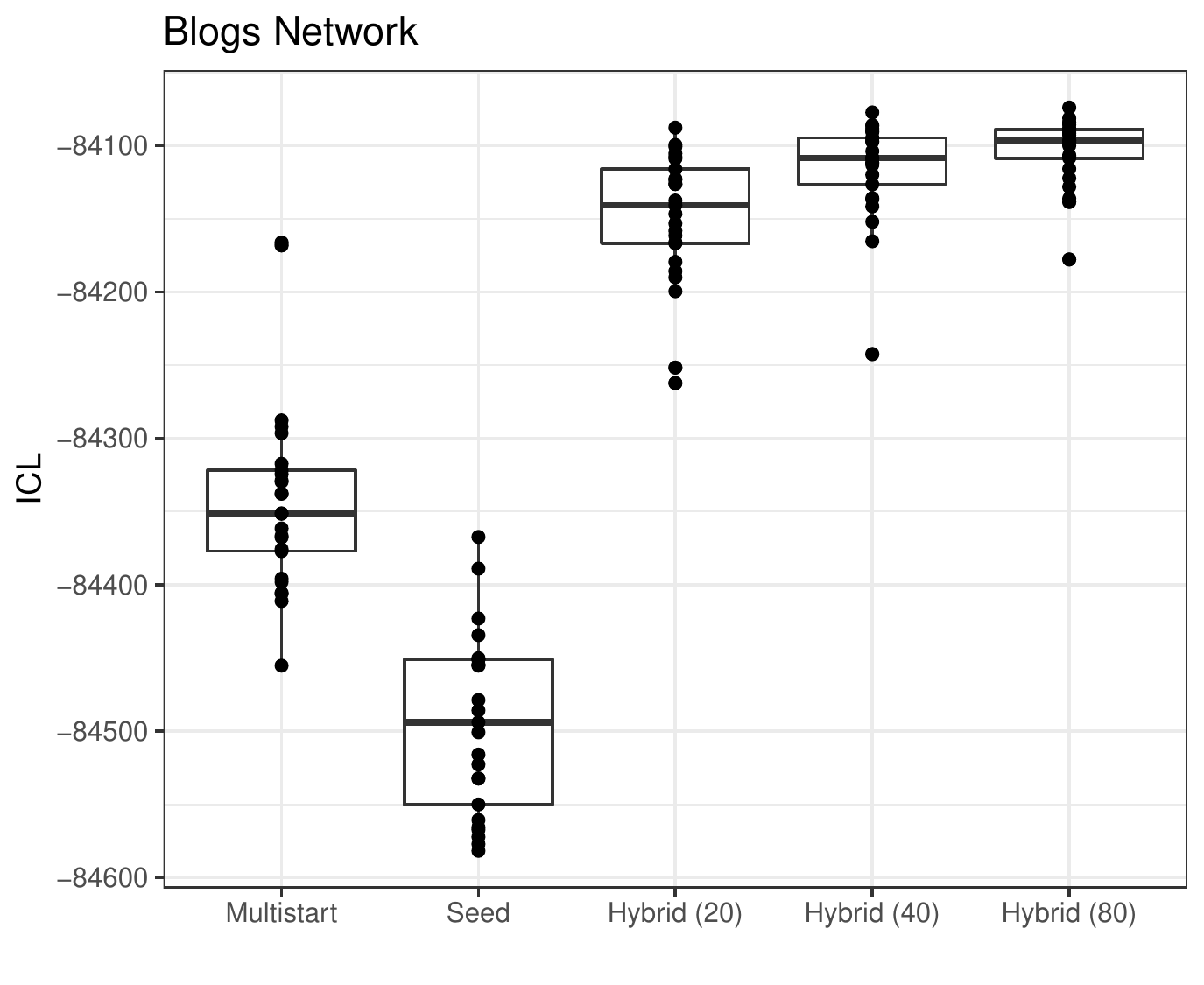}\\
			\includegraphics[width=0.49\textwidth]{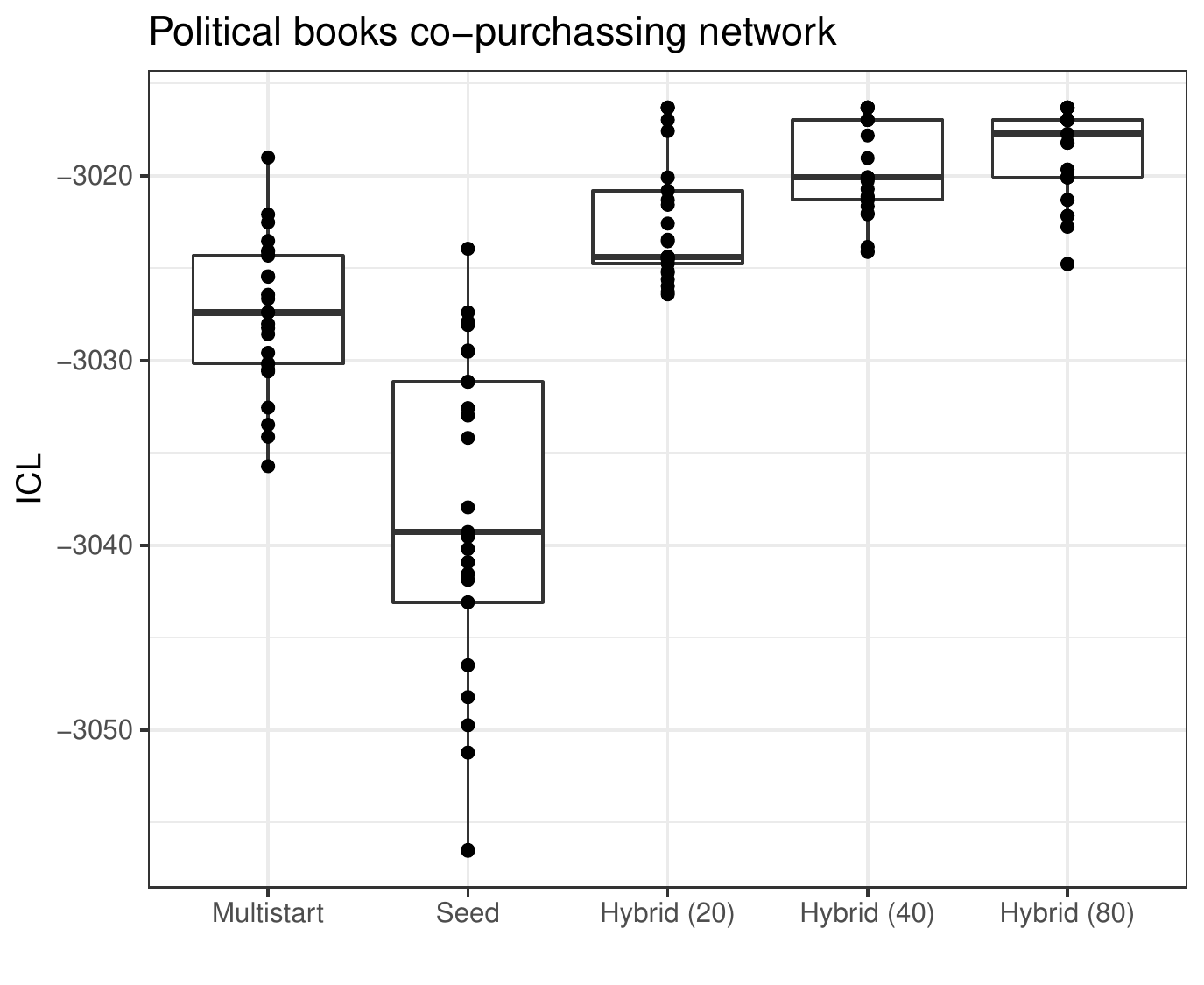}&\includegraphics[width=0.49\textwidth]{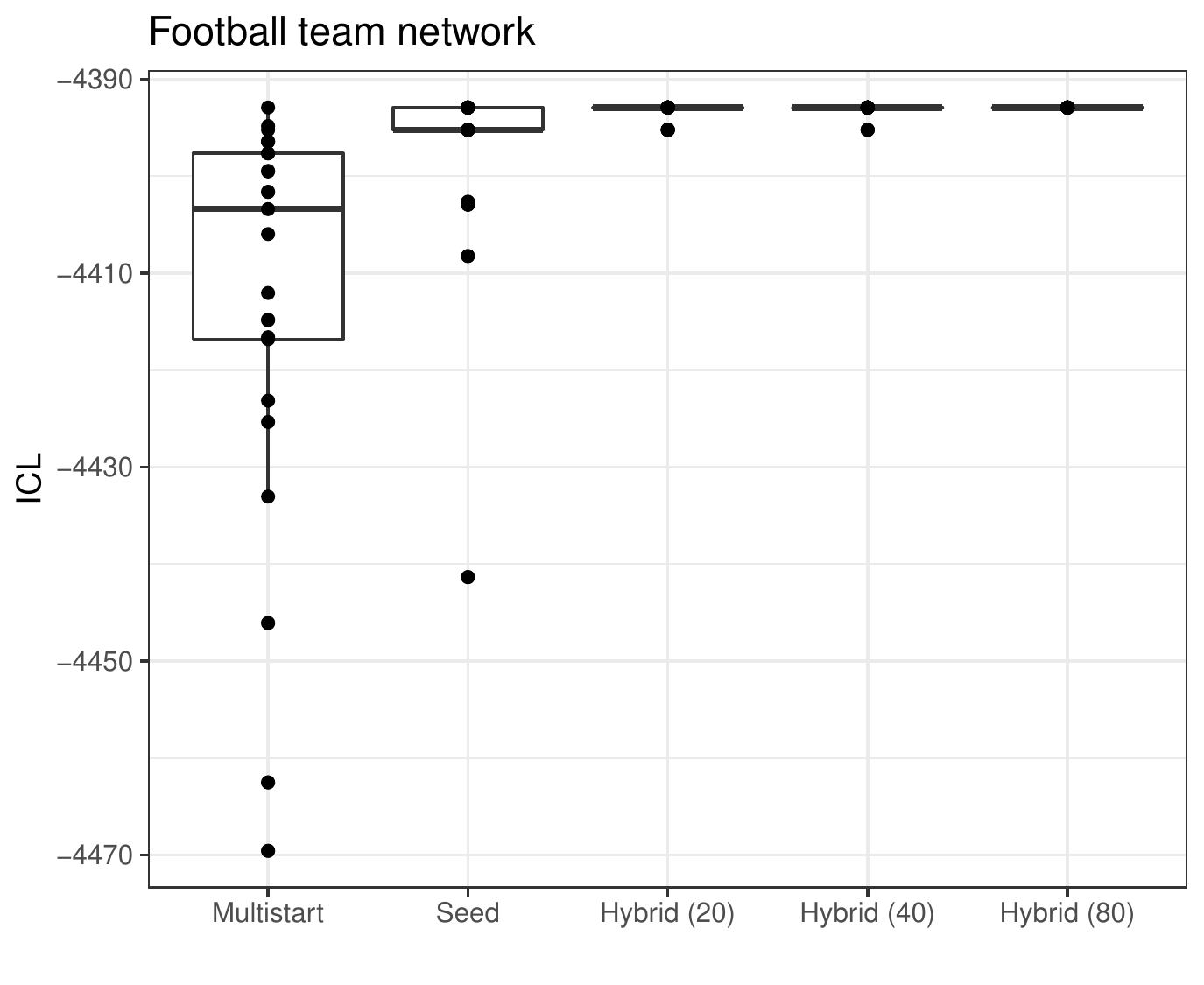}\\
			\includegraphics[width=0.49\textwidth]{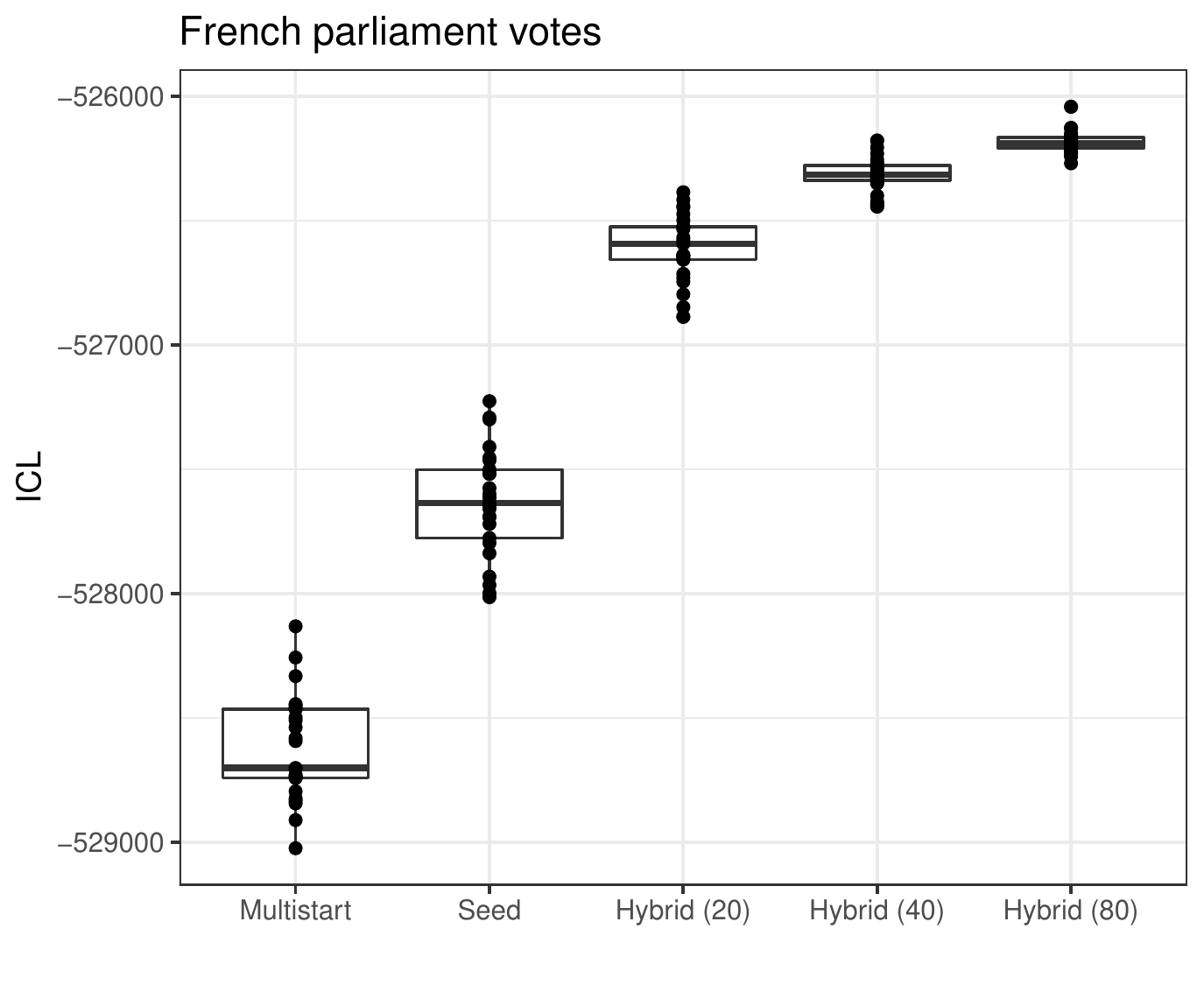}&\includegraphics[width=0.49\textwidth]{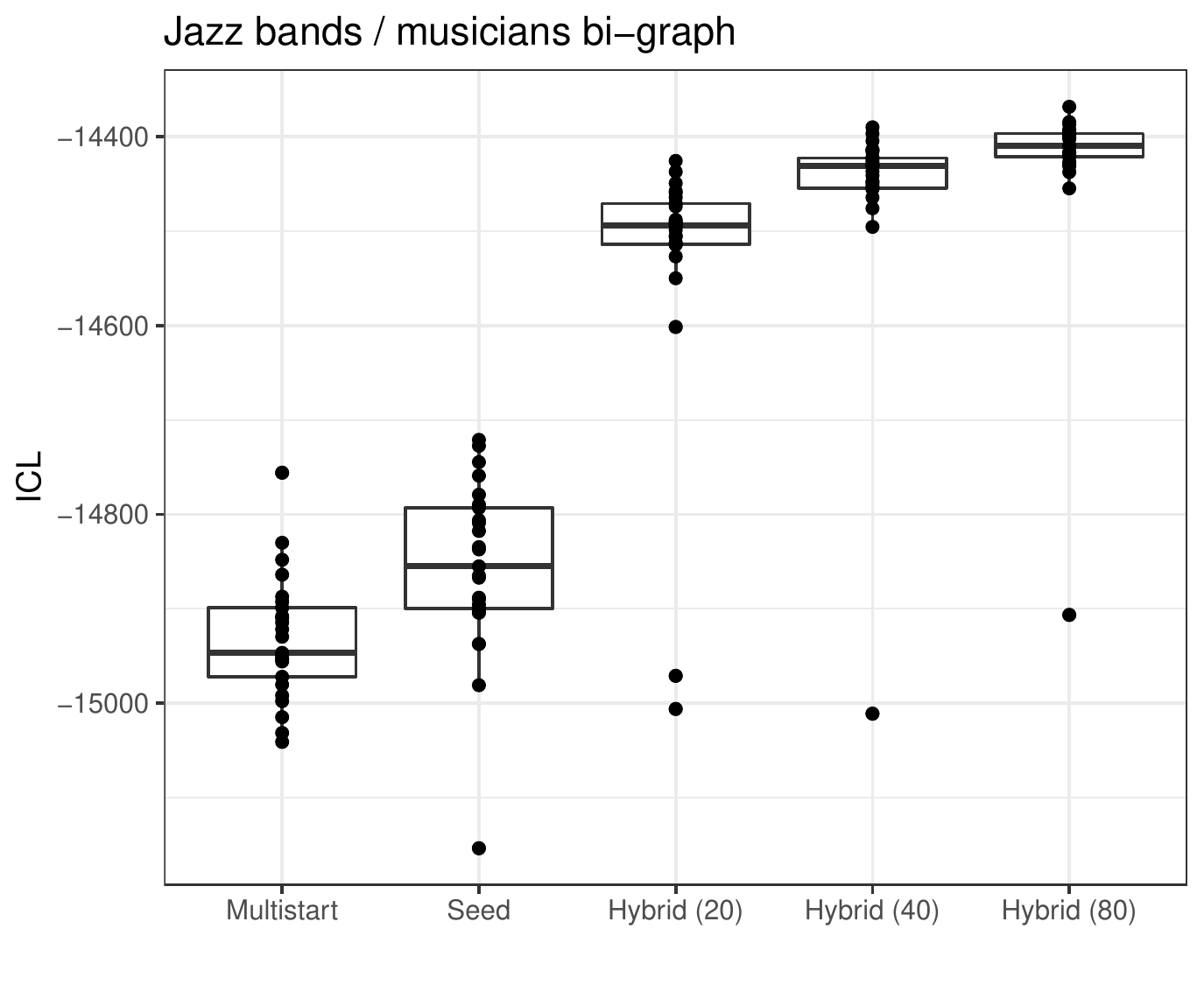}\\
		\end{tabular}
		\caption{Boxplots of the $\ICLex$ values obtained from 25 runs of the different algorithms on the six different datasets.}
		\label{fig:realdata_hybrid}
	\end{figure}
	
	The results are presented with boxplots in Figure~\ref{fig:realdata_hybrid} for all the methods that maximize the $\ICLex$. For all the datasets, the best results are achieved by the hybrid algorithm with a population of 80 partitions. For each experiment, while a bigger population size leads to better results with less variation, a small population size of 20 already achieves a significant improvement over the multiple and seeded strategies. Indeed, an important performance gap in terms of $\ICLex$ is visible between the three hybrid solutions and the two others. Moreover, some datasets like Jazz, Blogs and Political books highlights the interest of the multiple restart over the seeded strategy. This is expected for the experiments with directed networks (\textbf{Blogs}, \textbf{Books}), where the seed partitions are found using an undirected network model. Thus, it advocates for the use of directed model whenever possible for these datasets. This last experiment on the proposed hybrid algorithm clearly shows a benefit of using such an approach on real data. In the next section, we illustrate the interest of the hierarchical algorithm, giving a more detailed discussion about the clustering results on real datasets.

	\subsection{Hierarchical analysis of real datasets}
	\label{sub:ExpeHierarchical}
	In continuity with the motivating example of Figure \ref{fig:motivdendo}, the interest of the hierarchical procedure is illustrated on the real datasets introduced previously. Starting from the best solution of Algorithm \ref{alg:hybrid1}, with a population size of $40$, we build the hierarchy and the dendrogram for each of the examples. We start by describing the results on the four graph clustering datasets, then detailing the French parliament votes co-clustering one. 
	
	\paragraph{Newtork clustering} Figure \ref{fig:real-graph} shows the results of the proposed two-step methodology with the dc-SBM as the underlying model, highlighting its analytical and visual interest. Columns represent datasets and the first row corresponds to the adjacency matrices of each network, with the rows/columns arranged per cluster numbers and the color indicating the link density between clusters. Notice that clusters are reordered according to the leaf ordering of the dendrogram, bringing \textit{linked} clusters next to each other, enhancing the visualization of the block clustering. Next, the second row represents the cluster node link diagram, another representation of a graph clustering where the size of nodes is proportional to cluster size and the width of arrows to link density between clusters. Once again, we use the leaf ordering provided by the binary tree. The latter is then plotted as a dendrogram in the third row, emphasizing the amount of regularization (drop in $\alpha$) needed for each fusion.
	\begin{figure}[!ht]
		\centering
		\includegraphics[width=\textwidth]{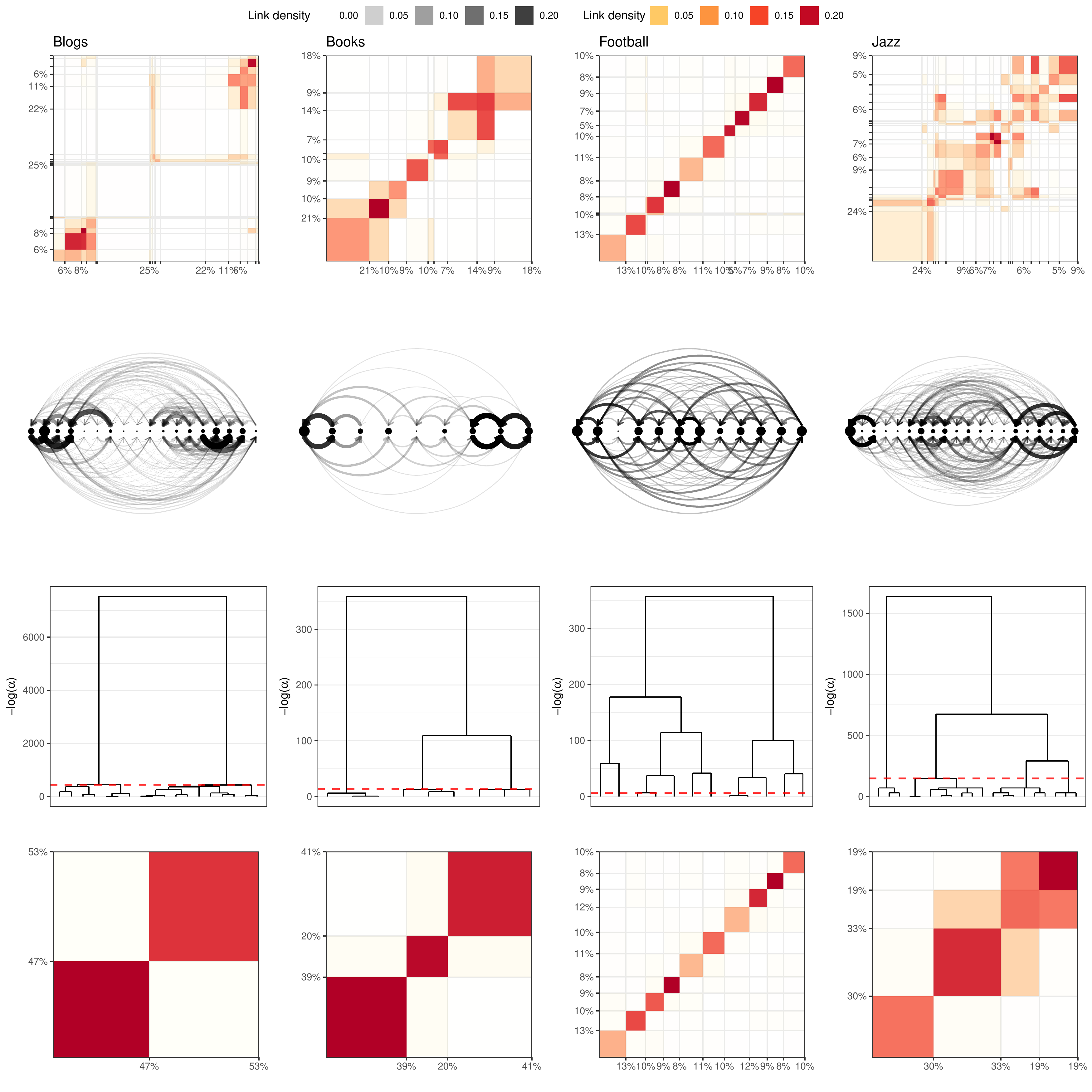}
		\caption{Illustration of the hierarchical agglomerative strategy on four real networks: blogs, books, football and jazz. First row: aggregated adjacency matrix according to the initial partition $\Clust^{(K)}$, with cluster reordering given by the leaf ordering of the dendrogram. Second row: node link diagram of $\Clust^{(K)}$. Third row: dendrogram of the hierarchy extracted from the initial partition. Fourth row: exploration of some clustering $\Clust^{(f_h)}$ alongside the hierarchy.}
		\label{fig:real-graph}
	\end{figure}
	
	In order to spot interesting levels in the dendrogram, we use a heuristic consisting in pruning the tree at a certain level $\alpha^{(f_h)}$ where the amounts of regularization needed for the next fusion is considered too important, relatively to the amount needed for past fusions. The fourth row of \Cref{fig:real-graph} represents the same adjacency matrices as in the first row, except the new clustering $\Clust_{f_h}$ is now used. For the Blogs network, starting from a solution with 18 clusters, the heuristic finds a lot of fusions for reasonable $\alpha$ levels, leaving 2 clusters at the selected level. The partition obtained at this level strongly aligns with the expected structure of this dataset: a divide between liberal/conservative blogs with an assortative structure of the two communities. Here, the NMI between the extracted partition and the labels provided with the dataset is 0.73. The method therefore allows to extract and analyse fine details with the initial partition and the bigger structures at another level of the dendrogram.

	 Likewise, for the Books dataset, the heuristic selects 3 clusters with assortative behavior which is the expected  structure for this network with liberal, conservative and a few neutral books. The corresponding partition has an NMI of 0.57 with the manual labelling of the books, and displays small differences in the "neutral" cluster.
	 
	 The Football network has a more pronounced and balanced community structure, with the initial partition $\Clust^{(11)}$ already strongly aligned with the additional information available on the football teams, i.e. their conference structure, which corresponds to an NMI of 0.86. There are a few independent teams that do not belong to any conference, which explains the observed differences. Remarkably, the dendrogram structure of this network is more balanced without a clear jump in $\log(\alpha)$. Still, the heuristic cuts the dendrogram after the first fusion at 10 clusters and the obtained partition is also close to the expected conference structure between the teams. 
	 
	 As for the Jazz network, it starts with 21 clusters and we propose to cut at 4 clusters according to the heuristic, with the corresponding $\Clust^{(4)}$ presenting an interesting block structure of three assortative communities and a fourth who bridges two of them. While this dataset does not possess side information to analyse the extracted structure quantitatively, the reordering of the $21$ initial clusters along with the dendrogram allows for a multi-level analysis from $\Clust^{(21)}$ to $\Clust^{(4)}$. This gives a complementary and clear view of the structure found in this graph with finer details available at the first levels of the hierarchy. 
	 
	 Overall, this highlights the relevance of the proposed hierarchical agglomeration in term of clustering quality and interpretability as well.

	\paragraph{Co-clustering on French assembly votes} We illustrate the hierarchical heuristic on the French assembly votes co-clustering dataset. The initial partition $\Clust^{(K)}$ found by Algorithm~\ref{alg:hybrid1} has 116 clusters divided in $70$ row clusters and $46$ columns clusters. These are quite large numbers for a dataset of this size, and one might want to explore solutions with fewer row clusters. As explained above, the hierarchical algorithm can build two separate dendrograms for rows and columns, which are linked by their merging sequence $(\alpha_f)_f$. Then, using the same heuristic on this regularization sequence, we chose to cut both dendrogram at the same level, thus determining a number of row and column clusters. In this example, the chosen level leaves the same number of $13$ rows and $13$ columns clusters. Inspecting the row clustering, we found it consistent with the true labels, which are the political party memberships. Some members of Parliament (MPs) in different opposition groups from the left (communists, socialists) are gathered in a single cluster, whereas MPs from the majority group (LREM) are split into $5$ different clusters, with some having centrists or right-wing opposition members. This agrees with the current separations and relationships in the French Parliament and the French political field. 
	
	\begin{figure}[!ht]
		\centering
		\includegraphics[width=\textwidth]{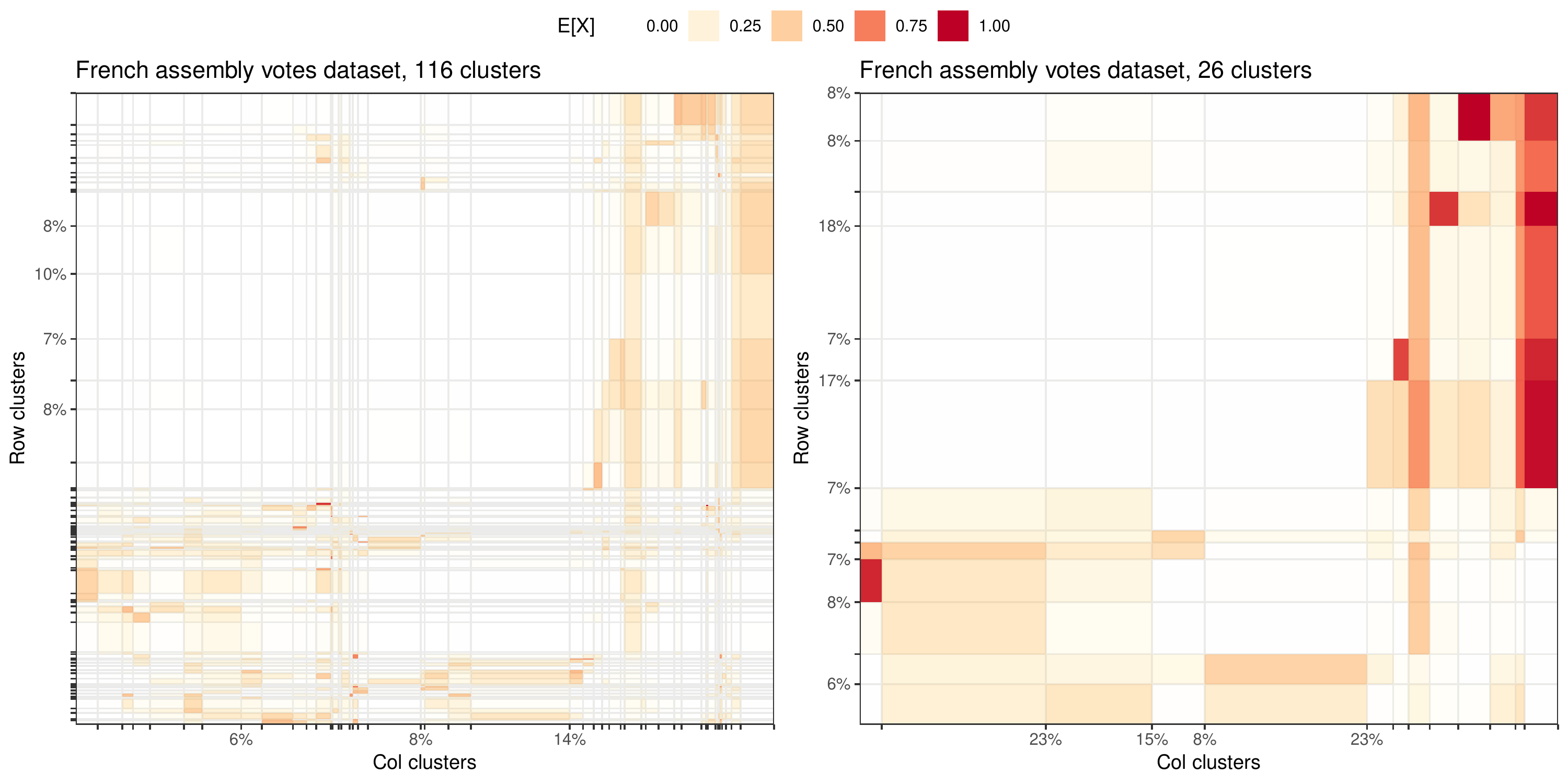}
		\caption{Block matrix representation of the \textbf{French Parliament} dataset after cluster reordering (left) and coarser clustering extraction (right).}
		\label{fig:real-pol-mat}
	\end{figure}
	
	\begin{figure}[!ht]
		\centering
		\includegraphics[width=\textwidth]{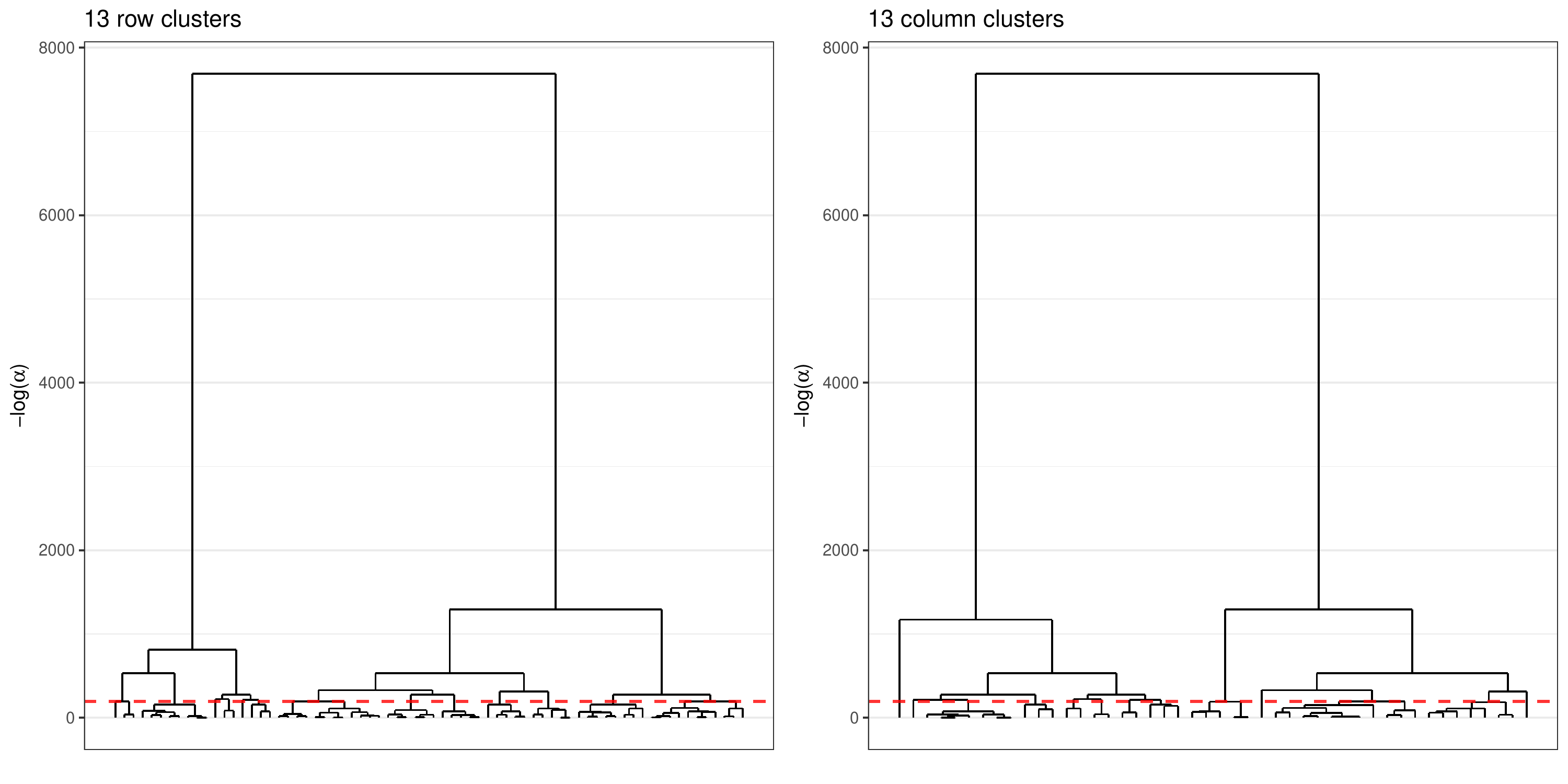}
		\caption{Row clusters dendrogram (left), and columns clusters dendrogram (right) for the \textbf{French Parliament} dataset. The dashed red line represents the height used to cut the dendrogram and to extract a coarser clustering.}
		\label{fig:real-pol-dendo}
	\end{figure}

\section{\revision{Related works}}
\revision{This paper introduced a two-fold contribution relying on two distinct optimization strategies, for which we now highlight connections with the existing literature.}

\paragraph{Local maxima and genetic clustering algorithms}
Apart from the work of \textcite{tessier2006evolutionary} for the latent class model, evolutionary algorithms were proposed for Gaussian model-based clustering, maximizing the (non integrated) classification likelihood \parencite{andrews2013using}, and in the context of feature selection \parencite{scrucca2016genetic}. More generally, the specific use of GAs for clustering problems is not new \parencite{Cole1998}, and we refer to \textcite{Hruschka2009} for a recent and detailed review on the subject.

\paragraph{Hierarchical clustering using the ICL}
Model-based hierarchical clustering extends the idea of non-parametric and similarity-based hierarchical clustering strategy, such as Ward's methods \parencite{ward1963hierarchical} or complete-link \parencite{sokal1958statistical} and single-link \parencite{sneath1957application} clustering. The first work of \textcite{murtagh1984fitting} extends Ward's criterion as the likelihood in an isotropic Gaussian mixture models, and was later extended to the general case of spectral constraints $\bS_k = \lambda_k \bD_k \bDelta_k \bD_k^\top$ \parencite{banfield1993model,fraley1998algorithms}. In this spirit, \textcite{zhong2003unified} proposed an extension of Ward's distance as the difference of log-likelihoods before and after a merge, along with ways to approximate it when the inference step is too costly to be done for each fusion.

More recently, model selection criteria were proposed as objective functions in hierarchical clustering algorithms. \textcite{heller2005bayesian} proposed a hierarchical Bayesian clustering algorithm, based on hypothesis testing. Marginal likelihoods of clusters are computed at each stage, using conjugate priors involving similar expressions as in the $\ICLex$. Explicitly working with a $\ICLbic$ criterion, \textcite{Baudry2010} proposed a soft hierarchical clustering algorithm for finite mixture models. Relying on an asymptotic approximation rather than exact derivation, it chooses the merge inducing lowest posterior entropy for the cluster memberships probabilities. Thus, the latter is used to assess clustering quality, and the output is a hierarchy of soft partitions. In the context of network analysis, \textcite{peixoto2014hierarchical} proposed a greedy hierarchical clustering algorithm for a hierarchical formulation of the SBM, using another model selection criterion: the \textit{description length}. Although the criterion differs, the author shows that it matches the $\ICLex$ when the prior on the connection probabilities of the SBM is replaced by a nested sequence of priors and hyper-priors.
	
	\section{Conclusion}
	
	In this paper, we proposed a new methodology for model-based hierarchical clustering with discrete latent variables models, based on two related contributions. The first one uses a hybrid genetic algorithm to jointly cluster the data and select the number of cluster $\K$. The second one uses the former as an initialization and completes the hierarchy  by including a Dirichlet hyper-parameter $\alpha$ in the objective criterion, allowing to access coarser partitions. Both methods share the ICL as an objective criterion to maximize, and their interest lies on their computational efficiency as well as the wide variety of models they can handle. Numerical experiments assessed the interest and superiority of the genetic clustering algorithm over existing methods, for some of the most common models for discrete data or graphs clustering. In addition, experiments on real datasets were conducted to illustrate the interest of the hierarchical algorithm in real-world applications for both clustering and co-clustering. The resulting hierarchy may be visualized as a dendrogram, and explored as well as the amount of regularization needed for each fusion. Moreover, we illustrated how the leaf ordering of the dendrogram may be used to reorder clusters in the initial partition, enhancing the visualization of any clustering. 

	Regarding further works, we plan to focus on the special case of Gaussian mixtures. Indeed, the latter perfectly fit into the DLVM framework and an exact ICL is available \parencite{Bertoletti2015}. However, the difficulty of setting uninformative priors must be addressed carefully, as the clustering results is greatly influenced by these.
	
	\section*{Acknowledgement}
	The authors would like to thank the editor and the two anonymous referees for their fruitful comments which helped to improve this paper. 
	
	\clearpage
	
	\appendix

	\section{Deriving exact ICL: application to some DLVMs}
	\label{sec:models}

     This appendix discusses the detail of $\ICLex$ derivation for the discrete latent variable models used in the experiments of \Cref{sec:exp}. First, we detail how the marginal distribution of $\Clust$ is obtained in \Cref{eq:ICLexdec}: this part is common to all DLVMs. Then, the only quantity needed to explicit a particular model is $\log \p(\Obs|\Clust, \bBeta)$, namely the supposed generative model at hand in \Cref{HCICL:eq:DLVM}. The latter additionally depends on the prior distribution on $\param$, which is governed by the $\bBeta$ hyper-parameters, and we discuss model dependent specifications of the latter. 
	
	\subsection{Marginal distribution of \texorpdfstring{$\Clust$}{Z}: Dirichlet-Multinomial conjugacy}
	\label{appendix:HCICL:p(Z)}
	We recall the expression of $\ICLex$ in \Cref{eq:ICLex}:
	\begin{equation*}
		\ICLex(\Clust) = \log \p (\Obs \mid \Clust, \bBeta) + \log \p (\Clust \mid \balpha).
	\end{equation*}
	As explained in the introduction, the second term is analytically tractable when $\bPi \sim \Dir_{\K}(\balpha = (\alpha, \ldots, \alpha))$, leading to the expression in \Cref{eq:ICLexdec}. This is obtained by an application of standard Dirichlet-Multinomial conjugacy, which is detailed in the following for the sake of completeness. 

	Let us denote by $C(\bm{t})$ the normalization constant of the Dirichlet distribution:
	\[
	C(\bm{t}) = \frac{\prod_{k=1}^\K \Gamma(t_k)}{\Gamma(\sum_{k=1}^{\K} t_k)}.
	\]
	Then, we want to compute the following integral:
	\begin{align*}
		\p(\Clust \mid \alpha) &= \int_{\bPi} \p(\Clust \mid \bPi) \p(\bPi \mid \alpha) \dif \bPi, \\
		&= \int_{\bPi} \left(\prod_{i=1}^{\nb} \Mult_\K(\clust_{i} \mid 1, \bPi) \right) \Dir_\K(\bPi \mid \balpha) \dif \bPi, \\
		&= \int_{\bPi} \left( \prod_{k=1}^{\K} \pi_k^{\sum_i \rawclust_{ik}} \right)  \frac{1}{C(\balpha)} \prod_{k=1}^{\K} \pi_k^{\alpha - 1}  \dif \bPi, \\
		&= \frac{C(\bm{\nb + \balpha })}{C(\balpha)} \int_{\bPi} \Dir_\K(\bPi \mid \balpha + \bm{\nb}) \dif \bPi, \\
		& = \frac{C(\bm{\nb + \balpha})}{C(\balpha)},
	\end{align*}
	with $\nb_k = \sum_i \rawclust_{ik}$. Thus, we obtain the desired result as:
	\begin{equation}
		\log \p(\Clust \mid \alpha) = \log \left\{ \frac{\Gamma(\alpha \K) \textstyle \prod_k \Gamma(\alpha + n_k )}{\Gamma(\alpha)^{\K} \Gamma(\nb + \alpha \K)} \right\}	
	\end{equation}
	
	\subsection{Mixture of multinomials}
	
	Multivariate count data arise in many scientific fields in the form of frequency counts, such as word occurrence in text analysis, read counts in RNA-seq data, or species abundance data in ecology. Formally, an observation $\obs_{i}$ is supposed to be a count vector in $\mathbb{N}^\dim$, where $\rawobs_{ij}$ represents the count of modality $j$, with total count $\totalcount_i = \sum_{j=1}^{\dim} \rawobs_{ij}$. 
	Here, we consider the mixture of multinomials (MoM) model, which is a mixture model for the clustering of discrete data. In a Bayesian context, we define a symmetric conjugate Dirichlet prior on each parameter $\param_k$ and the generative model of Equation~\eqref{eq:MixtureModels} is given by:
	\begin{align}
	\label{HCICL:eq:MoM}
	\param_{k}&\sim \mathcal{D}_{\dim}(\bBeta = (\beta, \ldots, \beta)),\nonumber\\
	\obs_{i}|\rawclust_{ik}=1,\param &\sim\mathcal{M}_{\dim}(\totalcount_i, \param_{k}).
	\end{align}
	Then, each parameter $\param_k$ can be marginalized out exactly, giving a Dirichlet-multinomial distribution \parencite{minka2000estimating} per cluster.
	\begin{proposition}
		\label{prop:MoMICL} Under the mixture of multinomials model of \Cref{HCICL:eq:MoM}, we have:
		\begin{equation}
		\label{eq:MoMICL}
		\log \p(\Obs|\Clust)= \sum_k\log\left( \dfrac{\Gamma( \beta \dim)\prod\limits_{j=1}^\dim \Gamma( o_{kj}+\beta )}{\Gamma(\beta)^\dim \, \Gamma(\totalcount_k + \beta \dim)}\right) + \log B(\Obs),
		\end{equation}
		with $o_{kj}=\sum_{i=1}^{\nb}\rawclust_{ik}\rawobs_{ij}$, $\totalcount_k=\sum_{j=1}^{\dim} o_{kj}$ and $B(\Obs)$ is a constant that does not depend on $\Clust$ or $\bBeta$.
	\end{proposition}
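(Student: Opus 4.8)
The plan is to exploit the conditional independence structure of the mixture of multinomials together with Dirichlet--Multinomial conjugacy, applied \emph{cluster by cluster}. First I would note that, given the partition $\Clust$, each observation $\obs_i$ with $\rawclust_{ik}=1$ depends only on $\param_k$, while the prior factorizes over clusters, $\p(\param \mid \bBeta) = \prod_{k} \mathcal{D}_{\dim}(\param_k \mid \bBeta)$. Consequently the integral defining $\log \p(\Obs \mid \Clust) = \log \int_{\param} \p(\Obs \mid \Clust, \param)\, \p(\param \mid \bBeta) \dif\param$ splits as a product over $k$ of per-cluster marginal likelihoods of the form $\int_{\param_k} \big( \prod_{i \,:\, \rawclust_{ik}=1} \mathcal{M}_{\dim}(\obs_i \mid \totalcount_i, \param_k) \big)\, \mathcal{D}_{\dim}(\param_k \mid \bBeta)\dif\param_k$, so it suffices to evaluate one such factor.

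Next I would expand each multinomial pmf as $\mathcal{M}_{\dim}(\obs_i \mid \totalcount_i, \param_k) = \binom{\totalcount_i}{\rawobs_{i1},\ldots,\rawobs_{i\dim}} \prod_{j=1}^{\dim} \theta_{kj}^{\rawobs_{ij}}$. The product over the $i$ in cluster $k$ then separates into (i) a product of multinomial coefficients, which, once multiplied across all clusters, collects into the single factor $B(\Obs) = \prod_{i=1}^{\nb}\binom{\totalcount_i}{\rawobs_{i1},\ldots,\rawobs_{i\dim}}$ --- manifestly independent of $\Clust$ and of $\bBeta$ --- and (ii) the power term $\prod_{j} \theta_{kj}^{\sum_i \rawclust_{ik}\rawobs_{ij}} = \prod_j \theta_{kj}^{o_{kj}}$, which is exactly the sufficient-statistic aggregation announced in the statement, with $o_{kj}=\sum_i \rawclust_{ik}\rawobs_{ij}$.

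Each per-cluster integral then reads $\frac{1}{C(\bBeta)}\int_{\param_k}\prod_{j=1}^{\dim}\theta_{kj}^{o_{kj}+\beta-1}\dif\param_k$, which by the same normalizing-constant argument used above for $\p(\Clust \mid \alpha)$ --- namely $\int \prod_j \theta_j^{t_j-1}\dif\param = C(\bm{t})$ with $C(\bm{t}) = \prod_j\Gamma(t_j)/\Gamma(\sum_j t_j)$ over the simplex --- equals $C(\bm{o}_k+\bBeta)/C(\bBeta)$ with $\bm{o}_k=(o_{k1},\ldots,o_{k\dim})$. Writing $C(\bBeta) = \Gamma(\beta)^{\dim}/\Gamma(\beta\dim)$ and $C(\bm{o}_k+\bBeta) = \prod_j\Gamma(o_{kj}+\beta)/\Gamma(\totalcount_k+\beta\dim)$ with $\totalcount_k=\sum_j o_{kj}$, multiplying over $k$, reinstating the factor $B(\Obs)$, and passing to logarithms yields precisely \eqref{eq:MoMICL}.

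There is no substantial obstacle here; the work is entirely bookkeeping. The two points deserving a careful check are that the combinatorial prefactor $B(\Obs)$ genuinely does not depend on the partition --- it is a product over \emph{all} $i$ of quantities attached to $\obs_i$ alone, so any reassignment of observations to clusters leaves it unchanged --- and that the prior normalizing constant $C(\bBeta)$ enters exactly once per cluster, so that in the final product the factor $\Gamma(\beta\dim)/\Gamma(\beta)^{\dim}$ appears with the correct exponent inside the sum over $k$.
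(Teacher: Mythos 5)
Your proposal is correct and follows essentially the same route as the paper's proof: factorize the integral over clusters via the conditional independence and the product prior, expand the multinomial pmfs to collect the sufficient statistics $o_{kj}$, and apply the Dirichlet normalizing-constant identity $C(\bm{o}_k+\bBeta)/C(\bBeta)$ in each cluster. The only cosmetic difference is your choice of $B(\Obs)=\prod_i\binom{\totalcount_i}{\rawobs_{i1},\ldots,\rawobs_{i\dim}}$ versus the paper's $1/\prod_{i,j}\rawobs_{ij}!$; both are constants in $\Clust$ and $\bBeta$, so the statement is unaffected.
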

		\begin{proof}[Proof of \Cref{prop:MoMICL}]
		Here, $\param = (\param_k)_k \in \Simplex_\dim^\K$ and the conditional likelihood, given $\Clust$, of the MoM generative model is:
		\begin{align*}
			\p(\Obs \mid \Clust, \param) & = \prod_{k=1}^{\K} \prod_{i=1}^{\nb} \Mult_{\dim}(\obs_i \mid \totalcount_{i}, \param_k)^{\rawclust_{ik}}, 
		\end{align*}
		We wish to integrate out the parameters $\param \sim \otimes_k \Dir_\dim(\bBeta = (\beta, \ldots, \beta))$. A use of Fubini's formula allows leveraging Dirichlet-Multinomial conjugacy for $\K$ different integrals:
		\begin{align*}
			\p(\Obs \mid \Clust, \beta) &= \int_\param \p(\Obs \mid \Clust, \param) \p(\param \mid \bBeta) \dif \param , \\
			& = \prod_{k=1}^{\K} \int_{\param_k}  \prod_{i=1}^{\nb} \Mult_{\dim}(\obs_i \mid \totalcount_{i}, \param_k)^{\rawclust_{ik}} \Dir_{\dim}(\param_k \mid \bBeta) \dif \param_{k}, \\
			&=  \frac{1}{\prod_{i,j} \rawobs_{ij}!} \prod_{k=1}^{\K}   \int_{\param_k} \left( \prod_{j=1}^{\dim} \rawparam_{kj}^{\sum_i \rawclust_{ik} \rawobs_{ij}} \right) \frac{1}{C(\bBeta)} \prod_{j=1}^{\dim} \rawparam_{kj}^{\beta - 1} \dif \param_k, \\
			&=  \frac{1}{\prod_{i,j} \rawobs_{ij}!} \prod_{k=1}^{\K}  \frac{C(\bm{o}_k)}{C(\bBeta)} \int_{\param_k} \Dir_{\dim}(\param_k \mid \bBeta + \bm{o}_k) \dif \param_k , \\
			& = \frac{1}{\prod_{i,j} \rawobs_{ij}!} \times \prod_{k=1}^{\K}  \dfrac{\Gamma( \beta \dim)\prod\limits_{j=1}^\dim \Gamma( o_{kj}+\beta )}{\Gamma(\beta)^\dim \, \Gamma(\totalcount_k + \beta \dim)},
		\end{align*}
		with $o_{kj} = \sum_{i=1}^{\nb} \rawclust_{ik}\rawobs_{ij}$ and $\totalcount_k = \sum_{j=1}^{\dim} o_{kj}$. Finally, denote 
			\[
			B(\Obs) = \frac{1}{\prod_{i,j} \rawobs_{ij}!},
			\]
			which solely depends on $\Obs$. Thus, taking the log concludes the proof.
	\end{proof}
	
	\textcite{tessier2006evolutionary, Biernacki2010} analogously derived an exact ICL criterion  for the latent class model (LCM) which is closely related to the MoM model, and we emphasize that the LCM model also fits in the proposed framework. The derivation of greedy updates for merge or swap moves does not present difficulties for these models. As for setting the $\beta$ hyper-parameter, uninformative prior or Jeffreys prior can be used by setting $\beta$ to $1$ or $\frac{1}{2}$.

	\subsection{Stochastic block models and degree correction}
	
We now describe the derivations for the standard binary SBM described in \Cref{eq:SBM} as well as its degree-corrected variant.

	\paragraph{Binary SBM}
	In the binary SBM framework, $\rawobs_{ij}$ are Bernoulli random variables indicating the presence or absence of an edge. As mentioned above, the probability of a connection between the nodes $i$ and $j$ only depends on their cluster assignments $\clust_{i}$ and $\clust_{j}$. Hence, there is a connection probability parameter $\rawparam_{kl}$ for each pair of clusters. Ultimately, a Bayesian formulation of SBM is given by: 
	\begin{align}
	\rawparam_{kl}&\sim \Betad(\eta^0,\zeta^0),\nonumber\\
	\rawobs_{ij}|\rawclust_{ik}\rawclust_{jl}=1,\param&\sim \Bernoulli(\rawparam_{kl}),
	\end{align}
	where the Beta prior on the connection probabilities is used as a conjugate of the Bernoulli distribution with hyper-parameter $\bBeta = (\eta^0, \zeta^0)$. \textcite{Come2015} derived an exact ICL criterion for this model, relying on Beta-Bernoulli conjugacy.
	\begin{proposition}[Proof in \textcite{Come2015}, Appendix A]
		\label{prop:SBMICL}
		Under the SBM model, we have:
		\begin{equation}
		\label{eq:SBMICL}
		\log \p(\Obs|\Clust)=\sum_{k,l}\log\left(\frac{\Gamma(\eta^0+\zeta^0)\Gamma(\eta_{kl})\Gamma(\zeta_{kl})}{\Gamma(\eta^0)\Gamma(\zeta^0)\Gamma(\eta_{kl}+\zeta_{kl})}\right),\\
		\end{equation}
		with $\eta_{kl}= \eta^0 + \sum_{i \neq j}\rawclust_{ik}\rawclust_{jl}\rawobs_{ij}$ and $\zeta_{kl}= \zeta^0 + \sum_{i \neq j}\rawclust_{ik}\rawclust_{jl}(1-\rawobs_{ij})$.
	\end{proposition}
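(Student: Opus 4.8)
The plan is to mirror the proof of \Cref{prop:MoMICL}, replacing Dirichlet--Multinomial conjugacy by Beta--Bernoulli conjugacy applied block-wise. First I would write the conditional likelihood of the (directed) binary SBM given the partition and the block parameters. Since each edge variable $\rawobs_{ij}$ with $i\neq j$ is Bernoulli with parameter $\rawparam_{kl}$ on the event $\rawclust_{ik}\rawclust_{jl}=1$, the likelihood factorizes over ordered pairs and can be regrouped by blocks:
\[
\p(\Obs\mid\Clust,\param)=\prod_{i\neq j}\prod_{k,l}\left(\rawparam_{kl}^{\rawobs_{ij}}(1-\rawparam_{kl})^{1-\rawobs_{ij}}\right)^{\rawclust_{ik}\rawclust_{jl}}=\prod_{k,l}\rawparam_{kl}^{\,S_{kl}}\,(1-\rawparam_{kl})^{\,\bar S_{kl}},
\]
where $S_{kl}=\sum_{i\neq j}\rawclust_{ik}\rawclust_{jl}\rawobs_{ij}$ counts the edges from block $k$ to block $l$ and $\bar S_{kl}=\sum_{i\neq j}\rawclust_{ik}\rawclust_{jl}(1-\rawobs_{ij})$ the non-edges. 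The only care needed at this step is the bookkeeping that each ordered pair $(i,j)$ contributes to exactly one block, which follows from $\clust_i$ and $\clust_j$ being one-hot vectors.

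Next I would integrate out $\param\sim\otimes_{k,l}\Betad(\eta^0,\zeta^0)$. Because both the conditional likelihood and the prior factorize over blocks, Fubini's theorem reduces the computation to independent one-dimensional integrals indexed by the pairs $(k,l)$, each of which is a Beta integral:
\[
\int_0^1 \rawparam_{kl}^{\,S_{kl}+\eta^0-1}(1-\rawparam_{kl})^{\,\bar S_{kl}+\zeta^0-1}\dif\rawparam_{kl}=\frac{\Gamma(\eta_{kl})\Gamma(\zeta_{kl})}{\Gamma(\eta_{kl}+\zeta_{kl})},
\]
with $\eta_{kl}=\eta^0+S_{kl}$ and $\zeta_{kl}=\zeta^0+\bar S_{kl}$, exactly as in the statement. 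Dividing by the prior normalizing constant $\Gamma(\eta^0)\Gamma(\zeta^0)/\Gamma(\eta^0+\zeta^0)$ in each block and taking the logarithm yields the announced sum over $(k,l)$. One could equivalently phrase the block-wise integral through the constant $C(\cdot)$ introduced in \Cref{appendix:HCICL:p(Z)} evaluated at two-component arguments, to keep the notation uniform with the earlier derivations.

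I do not expect a genuine obstacle here: this is essentially the same conjugacy argument as in \Cref{prop:MoMICL}, and the result is already established in \textcite{Come2015}. The points deserving attention are (i) being explicit about the directed convention --- self-loops $i=j$ excluded, sums taken over ordered pairs --- so that the counts defining $\eta_{kl}$ and $\zeta_{kl}$ match the statement, and (ii) noting that, unlike in the MoM case, there is no residual data-dependent multiplicative constant $B(\Obs)$, since the Bernoulli mass function carries no combinatorial prefactor. The undirected case would be handled identically, restricting the product to unordered pairs $\{i,j\}$, which merely changes the range of the block index without affecting the structure of the argument.
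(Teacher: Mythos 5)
Your proof is correct and follows the same route as the referenced derivation in \textcite{Come2015} (Appendix A), namely block-wise factorization of the Bernoulli likelihood followed by Beta--Bernoulli conjugacy in each block, exactly mirroring the Dirichlet--Multinomial argument of \Cref{prop:MoMICL}. Your side remarks --- ordered pairs with $i\neq j$ for the directed case, and the absence of a residual $B(\Obs)$ constant because the Bernoulli mass function has no combinatorial prefactor --- are both accurate.
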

	Again, a commonly accepted value for setting the hyper-parameter $\bBeta$ is $\eta^0 = \zeta^0 = 1$ or $1/2$, for a uniform or Jeffreys prior respectively.
	
	\paragraph{Degree correction}
	Real world networks tend to exhibit a specific degree distribution, with some nodes having a number of links greatly superior to the average. In the SBM, all nodes inside a cluster are statistically equivalent, hence a simple SBM model may have some difficulty in reproducing such heterogeneous degree distributions. 
	\textcite{karrer2011stochastic} proposed a slight modification of the SBM to respect the degree sequences of the observed graph. It can be expressed as an SBM generative model, where the connection probability between two nodes now also depends on node parameters $\bm{\Phi}$ in order to introduce disparity between the nodes. This new model is called degree-corrected stochastic block model (dc-SBM). We introduce a slightly more general version of this model for directed graphs similar to the model introduced in~\textcite{zhu2014}, where the parameters $\bm{\Phi}^-$ and $\bm{\Phi}^+$ govern the out-degree and in-degree distributions of nodes respectively. Then,  defining the degree prior distributions as in \textcite{Newman2016} and \textcite{Riolo2017}, the model writes as follows:
	\begin{align}
	\label{model:dcSBM}
	\bm{\Omega}_{kl}&\sim \Exponential(\beta^{-1}),\nonumber\\
	\bm{\Phi}^+_k, \, \bm{\Phi}^-_k  \mid \Clust  &\sim \Uniform(\mathbb{S}_k),  \\
	\rawobs_{ij}|\rawclust_{ik}\rawclust_{jl}=1,\bm{\Omega},\bm{\Phi}&\sim \Poisson(\Phi^-_i\Omega_{kl} \Phi^+_j ). \nonumber
	\end{align}
	Here, $\bm{\Phi}_k^{\cdot} = (\Phi_i^{\cdot})_{i:\rawclust_{ik}=1}$, and $\mathbb{S}_k = n_k \Delta_{n_k}$ the rescaled simplex of dimension $(n_k-1)$ induced by the constraints $\sum_i \Phi_i^{\cdot} \rawclust_{ik} = n_k$. The latter must be set for the model to be identifiable. In this model the Bernoulli distribution of edges is replaced by a Poisson, in part to ease the computations, and the exponential distribution is used to leverage standard Gamma-Poisson conjugacy as $\Exponential(\beta^{-1}) = \Gamma(1, \beta)$. Thus, the $\ICLex$ of this model can be derived as detailed in the following proposition.
	\begin{proposition}
		\label{prop:DCSBMICL}
		Under the dc-SBM model we have:
		\begin{align}
		\label{eq:DCSBMICL}
		\log \p(\Obs|\Clust)= & \sum_k\log\left(\frac{(n_k-1)!\,n_{k}^{dg_k^+}}{(n_k+dg_k^+-1)!}\frac{(n_k-1)!\,n_{k}^{dg_k^-}}{(n_k+dg_k^--1)!}\right)\nonumber\\
		& + \sum_{k,l}\log\left(\frac{(\nu_{kl})! \, \beta^{\nu_{kl}}}{ \left(\beta n_k n_l + 1\right)^{\nu_{kl} + 1}}\right)+ \log B(\Obs),
		\end{align}
		where $\nu_{kl} = \sum_{i,j} z_{ik}z_{jl}\rawobs_{ij}$ is the total counts in block $(k,l)$, $d_i^-=\sum_j \rawobs_{ij}$ and $d_j^+=\sum_i\rawobs_{ij}$ correspond to node $i$ out-degree and in-degree respectively, and $dg_k^-$, $dg_k^+$ to their sums in cluster $k$. $B(\Obs)$ is a constant detailed in the Appendix, that does not depend on $\Clust$ or $\bBeta$.
	\end{proposition}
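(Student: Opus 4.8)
The plan is to follow the same route as for \Cref{prop:MoMICL,prop:SBMICL}: write the conditional likelihood $\p(\Obs\mid\Clust,\bm{\Omega},\bm{\Phi})$ explicitly, group the factors so that the priors on $\bm{\Omega}$ and $\bm{\Phi}$ integrate out blockwise by conjugacy, and collect every $\Clust$-independent leftover into the constant $B(\Obs)$. Writing $z_i$ for the cluster of node $i$ and using the conditional independence of the entries under $\Poisson(\Phi_i^-\Omega_{z_iz_j}\Phi_j^+)$, one collects the powers of $\Phi_i^-$, $\Phi_j^+$ and $\Omega_{kl}$ to obtain
\begin{align*}
\p(\Obs\mid\Clust,\bm{\Omega},\bm{\Phi}) ={}& \Big(\prod_{i,j}\tfrac{1}{\rawobs_{ij}!}\Big)\Big(\prod_{i}(\Phi_i^-)^{d_i^-}\Big)\Big(\prod_{j}(\Phi_j^+)^{d_j^+}\Big)\\
&\times\Big(\prod_{k,l}\Omega_{kl}^{\nu_{kl}}\Big)\exp\Big(-\textstyle\sum_{i,j}\Phi_i^-\Omega_{z_iz_j}\Phi_j^+\Big),
\end{align*}
with $d_i^-=\sum_j\rawobs_{ij}$, $d_j^+=\sum_i\rawobs_{ij}$ and $\nu_{kl}=\sum_{i,j}\rawclust_{ik}\rawclust_{jl}\rawobs_{ij}$.

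The conceptual crux is the next step: the identifiability constraints defining $\mathbb{S}_k$, namely $\sum_i\Phi_i^-\rawclust_{ik}=n_k$ and $\sum_j\Phi_j^+\rawclust_{jl}=n_l$, collapse the exponent to $\sum_{k,l}n_kn_l\,\Omega_{kl}$, which no longer involves the individual node parameters. Hence $\int\p(\Obs\mid\Clust,\bm{\Omega},\bm{\Phi})\,\p(\bm{\Omega}\mid\beta)\,\p(\bm{\Phi}\mid\Clust)\,\dif\bm{\Omega}\,\dif\bm{\Phi}$ factorizes into a product over blocks $(k,l)$ of one-dimensional $\Omega_{kl}$-integrals, a product over clusters $k$ of integrals of $\bm{\Phi}^-_k$ and $\bm{\Phi}^+_k$ over $\mathbb{S}_k$, and the overall constant $\prod_{i,j}1/\rawobs_{ij}!$.

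For the $\Omega$-part I would use $\Exponential(\beta^{-1})=\Gamma(1,\beta)$ and Gamma--Poisson conjugacy,
\[
\int_0^\infty \Omega_{kl}^{\nu_{kl}}\,e^{-n_kn_l\Omega_{kl}}\,\tfrac1\beta e^{-\Omega_{kl}/\beta}\,\dif\Omega_{kl}=\frac{\nu_{kl}!\,\beta^{\nu_{kl}}}{(\beta n_kn_l+1)^{\nu_{kl}+1}},
\]
which is exactly the second sum in \eqref{eq:DCSBMICL}. For the $\bm{\Phi}^-_k$-part I would substitute $\Phi_i^-=n_k\psi_i$ to map $\mathbb{S}_k=n_k\Delta_{n_k}$ onto the standard simplex: the Jacobian $n_k^{n_k-1}$ cancels against the matching factor in the normalizing constant $1/\mathrm{Vol}(\mathbb{S}_k)$ of the uniform density, leaving the Dirichlet normalizing integral $\int_{\Delta_{n_k}}\prod_{i:\,z_i=k}\psi_i^{d_i^-}\dif\psi=\prod_{i:\,z_i=k}d_i^-!\,/\,(n_k+dg_k^--1)!$. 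Reinstating the $n_k^{dg_k^-}$ from the substitution and the $(n_k-1)!$ coming from $\mathrm{Vol}(\Delta_{n_k})$, the $k$-th factor equals $\frac{(n_k-1)!\,n_k^{dg_k^-}}{(n_k+dg_k^--1)!}\prod_{i:\,z_i=k}d_i^-!$, and the identical computation handles $\bm{\Phi}^+_k$. Multiplying everything and noting that $\prod_k\prod_{i:\,z_i=k}d_i^-!\,d_i^+!=\prod_i d_i^-!\,d_i^+!$ depends only on $\Obs$, one sets $B(\Obs)=\prod_i d_i^-!\,d_i^+!\,/\,\prod_{i,j}\rawobs_{ij}!$ and takes logarithms to recover \eqref{eq:DCSBMICL}.

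I expect the main obstacle to be bookkeeping rather than hard analysis: one must fix once and for all the reference measure meant by ``uniform on $\mathbb{S}_k$'' (Lebesgue in the $n_k-1$ free coordinates) so that the volume and Jacobian factors cancel exactly and no stray $\sqrt{n_k}$ escapes into a $\Clust$-dependent term; and one must keep the directed case ($\bm{\Phi}^+\neq\bm{\Phi}^-$) and the handling of the diagonal $i=j$ consistent with the definitions of $\nu_{kl}$, $d_i^\pm$ and $dg_k^\pm$ — in particular the full, self-loop-inclusive sum in $\nu_{kl}$ is precisely what lets the exponent factorize through the constraints. Everything else reduces to the elementary Dirichlet and Gamma integrals.
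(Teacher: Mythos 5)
Your proposal follows essentially the same route as the paper's proof: collect the powers of $\Phi_i^-$, $\Phi_j^+$ and $\Omega_{kl}$ in the conditional Poisson likelihood, use the simplex constraints $\sum_i\Phi_i^{\cdot}\rawclust_{ik}=n_k$ to collapse the exponent to $\sum_{k,l}n_kn_l\Omega_{kl}$, integrate the $\bm{\Phi}_k^{\pm}$ over the rescaled simplex via the Dirichlet normalizing integral and the $\Omega_{kl}$ via Gamma--Poisson conjugacy, and absorb $\prod_i d_i^+!\,d_i^-!/\prod_{i,j}\rawobs_{ij}!$ into $B(\Obs)$. The computation is correct and complete, and your remarks on the reference measure on $\mathbb{S}_k$ and on keeping the self-loop-inclusive sum in $\nu_{kl}$ are exactly the bookkeeping points the paper's derivation relies on.
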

	\begin{proof}[Proof of \cref{prop:DCSBMICL}]
	Putting $\param = (\bm{\Phi}^+, \bm{\Phi}^-, \bm{\Omega})$, the conditional likelihood, given $(\Clust, \param)$, of the generative model described in Equation~\eqref{model:dcSBM} writes as:
	\begin{align}
	\label{eq:conditionalllhood}
	\p(\Obs \mid \Clust, \param) &=  \prod_{i,j}^N \prod_{k,l}^K \Poisson(\rawobs_{ij} \mid \Phi_i^+ \Phi_j^- \Omega_{kl})^{z_{ik}z_{jl}},   \nonumber \\ 
	&=   \frac{1}{\prod_{i,j} \rawobs_{ij}!} \times \prod_{i,j}^N (\Phi_i^+ \Phi_j^- )^{\rawobs_{ij}} \prod_{k,l}^K  \Omega_{kl}^{z_{ik}z_{jl}\rawobs_{ij}}  \exp( \Phi_i^+ \Phi_j^- \Omega_{kl} z_{ik}z_{jl}), \nonumber \\
	&=  \frac{1}{\prod_{i,j} \rawobs_{ij}!} \times \prod_i (\Phi^+_i)^{d_i^+} \times \prod_i (\Phi^-_i)^{d_i^-} \times \prod_{k,l}\Omega_{kl}^{\nu_{kl}}\exp(-n_kn_l\Omega_{kl}),
	\end{align}
	 $d_i^-=\sum_j \rawobs_{ij}$, $d_j^+=\sum_i\rawobs_{ij}$ and $\nu_{kl} = \sum_{i,j} z_{ik}z_{jl}\rawobs_{ij} $. Calculating the $\ICLex$ implies to integrate over $\param$. Notice that Equation \eqref{eq:conditionalllhood} is separable as the product of two parts, one depending on $\bm{\Phi}$ and the other on $\bm{\Omega}$. In the following, we detail calculations separately for both parts. 
	\paragraph{Integrating over $\bm{\Phi}$} Recall that  $\bm{\Phi}_k^{\cdot} = (\phi_i^{\cdot})_{i:z_{ik}=1}$ and that:
	\begin{equation}
	\p(\bm{\Phi}_k^{\cdot}) = \frac{1}{\textrm{vol}(\mathcal{S}_k)} \mathds{1}_{\mathcal{S}_k}(\bm{\Phi}_k^{\cdot}) , \quad \textrm{with } \mathbb{S}_k = \left\{\bm{\Phi}_k^{\cdot} \in\mathbb{R}^{n_k}:\sum_{i:z_{ik}=1}\frac{\Phi_i^{\cdot}}{n_k}=1\right\},
	\end{equation}
	which is simply the simplex of dimension $(n_k-1)$, rescaled by a factor $n_k$. Hence the volume of $\mathbb{S}_k$ is given by:
	\begin{equation}
	\int_{\mathbb{S}_k} \dif \bm{\Phi}_k^{\cdot} =\frac{n_k^{n_k}}{(n_k-1)!}.
	\end{equation}
	The situation is symmetric for $\bm{\Phi}_k^+$ or $\bm{\Phi}_k^-$. Thus, calculations are detailed only for the former. One needs to compute: 
	\begin{align}
	\frac{(n_k-1)!}{n_k^{n_k}}\int_{\mathbb{S}_k}\prod_{i:z_{ik}=1}\left(\Phi^+_i\right)^{d_i^+}\dif\bm{\Phi}_k^+
	&=\frac{(n_k-1)!}{n_k^{n_k}}n_{k}^{d_k^+}\int_{\mathbb{S}_k}\prod_{i:z_{ik}=1}\left(\frac{\Phi^+_i}{n_k}\right)^{d_i^+}\dif\mathbf{\bm{\Phi}}_k^+, \nonumber \\
	&=\frac{(n_k-1)!}{n_k^{n_k}}n_{k}^{d_k^+}\int_{\Delta_{n_k}} \prod_{i:z_{ik}=1}\left(x_{i}\right)^{d_i^+}|n_k\mathbf{I}_{n_k}| \dif \mathbf{x} , \nonumber \\
	&=\frac{n_k^{n_k}}{n_k^{n_k}}(n_k-1)! \, n_{k}^{d_k^+} C(\bm{a}_k) \int_{\Delta_{n_k}} \Dir_{n_k}\left(\scores \mid \bm{a}_k = (d_i + 1)_{i:\rawclust_{ik} =1} \right) \dif \mathbf{x} , \nonumber\\
	&=(n_k-1)! \, n_{k}^{d_k^+}\frac{\prod_{i:z_{ik}=1}d_i^+!}{(n_k+d_k^+-1)!} , \nonumber\\
	&=\frac{(n_k-1)!}{(n_k+d_k^+-1)!}n_{k}^{d_k^+}\prod_{i:z_{ik}=1}d_i^+! ,
	\end{align}
	with $d_k^+ = \sum_{i} \rawclust_{ik} d_i^+$. Then,
	\begin{align}
	\int_{\Phi^+} \p(\Phi^+) \, \prod_i \left(\Phi^+_i\right)^{d_i^+} \dif \bm{\Phi}^+ &= \int_{\prod_k \mathbb{S}_k}  \prod_k \frac{(n_k-1)!}{n_k^{n_k}} \prod_{i:z_{ik}=1} \left(\Phi^+_i\right)^{d_i^+} \dif \, (\bm{\Phi}_1^+, \ldots, \bm{\Phi}_K^+), \nonumber \\
	& =  \prod_k \frac{(n_k-1)!}{n_k^{n_k}} \int_{\mathbb{S}_k}   \prod_{i:z_{ik}=1} \left(\Phi^+_i\right)^{d_i^+} \dif \bm{\Phi}_k^+ \nonumber, \\
	& =  \prod_k\frac{(n_k-1)!}{(n_k+d_k^+-1)!}n_{k}^{d_k^+}\prod_{i}d_i^+! .
	\end{align}
	\paragraph{Integrating over $\bm{\Omega}$} This is done using a standard Gamma-Poisson conjugacy in each pair of clusters. Indeed:
	\begin{align}
	\int_{\Omega_{kl}} \p(\Omega_{kl}) \Omega_{kl}^{\nu_{kl}}\exp(-n_kn_l\Omega_{kl}) \dif \Omega_{kl} &= \int_{\Omega_{kl}} \frac{1}{\beta} \Omega_{kl}^{\nu_{kl}} \exp\left( - (n_kn_l + \frac{1}{\beta})\Omega_{kl} \right) \dif \Omega_{kl} , \nonumber \\
	&= \frac{\Gamma(\nu_{kl} + 1 )}{\beta \left(n_k n_l + \beta^{-1}\right)^{\nu_{kl} + 1}} , \nonumber \\
	&= \frac{\nu_{kl}! \, \beta^{\nu_{kl}}}{ \left(\beta n_k n_l + 1\right)^{\nu_{kl} + 1}} .
	\end{align}
	Utimately, we have:
	\begin{align}
	\p(\Obs \mid \Clust) &= \int_{\param} \p(\Obs, \param \mid \Clust) \dif \param, \nonumber \\
	&= \frac{\prod_{i}d_i^+!d_i^-!}{\prod_{ij} \rawobs_{ij}!} \, \prod_k\frac{(n_k-1)!}{(n_k+d_k^+-1)!}n_{k}^{d_k^+}\frac{(n_k-1)!}{(n_k+d_k^--1)!}n_{k}^{d_k^-} \nonumber \\
	&  \qquad  \qquad  \qquad  \qquad \times \prod_{k,l} \frac{ \nu_{kl}! \, \beta^{\nu_{kl}}}{ \left(\beta n_k n_l + 1\right)^{\nu_{kl} + 1}}.
	\end{align}
	Putting 
	\[
	B(\Obs) = 	 \frac{\prod_{i}d_i^+!d_i^-!}{\prod_{ij} \rawobs_{ij}!},
	\]
	and noticing that the latter does not depend on the partition $\Clust$ concludes the proof.
\end{proof}
	
	Contrary to the previous models where proper Jeffreys or uniform prior could be used, the exponential distribution does not admit a conventional uninformative prior. An acceptable solution to fix $\beta$ is, however, proposed in \textcite{Newman2016}, where the authors use the mean connection probability of the network. From a practical point of view, deriving swap and merge updates is also quite easy for these models, even though some care is needed to avoid unnecessary computations \parencite[Appendices B and C]{Come2015} and can be done efficiently using sparse matrices.  
	
	\subsection{Co-clustering and latent block model} 
	\label{subsec:LBM}
	Co-clustering aims at clustering simultaneously the rows and columns of a data matrix $\Obs$ of size $ \nb \times \dim$ into homogeneous groups. For example, in text analysis one may be interested into grouping documents and words together. The latent block model \parencite[LBM,][]{govaert2010}, introduced in \Cref{eq:LBM}, is a popular generative model to perform such task, forming a flexible class of models depending on the supposed observational model \parencite{Wyse2017}. The main feature of the LBM is its block generation hypothesis:
	\begin{align*}
	&\clustrow_i \sim \Mult_{\Kr}(1, \bPirow), \; \clustcol_j \sim \Mult_{\Kc}(1, \bPicol), \\
	&\rawobs_{ij} \mid \rawclustrow_{ik} \rawclustcol_{jl}=1,\param \sim \p(\cdot \mid \param_{kl}).
	\end{align*}
	Here, $\Clustrow$ and $\Clustcol$ are binary matrices defining a partition of the $\nb$ rows in $\Kr$ clusters and of the $\dim$ columns into $\Kc$ clusters respectively. The LBM may be handled similarly as other DLVMs, with a slight variation of the prior to handle the bipartition aspect:
	\begin{equation}
		\p(\bPi \mid \alpha) = \Dir_{\Kr}(\bPirow \mid \alpha) \times \Dir_{\Kc}(\bPicol \mid \alpha).
	\end{equation}
	With such a prior, the likelihood of the bipartition integrated with respect to $\bPi$ is factorized $\p(\Clust \mid \alpha) = \p(\Clustrow \mid \alpha) \p(\Clustcol \mid \alpha)$ and writes as:
	\begin{equation}
	\label{eq:LBMprior}
	\p(\Clust \mid \alpha) = \dfrac{\Gamma(\alpha\Kr)\prod\limits_{k=1}^{\Kr} \Gamma(\alpha + n_k)}{\Gamma(\alpha)^{\Kr} \, \Gamma(\nb + \alpha\, \Kr)} \times \dfrac{\Gamma(\alpha\Kc)\prod\limits_{l=1}^{\Kc} \Gamma(\alpha + m_l)}{\Gamma(\alpha)^{\Kc} \, \Gamma(\dim + \alpha\, \Kc)}.
	\end{equation}
	Again, this part is common to any LBM, and independent on the observational model at hand. Thus, the only quantity needed to derive $\ICLex$ for the LBM is $\log \p (\Obs \mid \Clust, \bBeta)$. The latter is often explicit when working with standard distributions for $\rawobs_{ij}$, leveraging on known conjugacy results. This is notably the case for standard discrete data distributions using Beta-Bernoulli or Gamma-Poisson conjugacy. Other types of distributions may be considered, \textit{e.g.} for continuous data $\rawobs_{ij}$, and exponential family distributions are good candidates to derive natural conjugate priors on $\param_{kl}$.
	
	Moreover, as already emphasized, the SBM and LBM are very similar and a degree-corrected LBM can also be derived for discrete Poisson observations as follows:
	\begin{align}
	\label{model:dcLBM}
	\bm{\Omega}_{kl}&\sim \Exponential(\beta^{-1}),\nonumber\\
	\bm{\Phi}^r_k \mid \Clustrow  &\sim \Uniform(\mathbb{S}_k), \nonumber \\
	\bm{\Phi}^c_l  \mid \Clustcol  &\sim \Uniform(\mathbb{S}_l),  \\
	\rawobs_{ij}|\rawclustrow_{ik}\rawclustcol_{jl}=1,\bm{\Omega},\bm{\Phi}^r,\bm{\Phi}^c&\sim \Poisson(\Phi^r_i\Omega_{kl} \Phi^c_j ). \nonumber
	\end{align}
	Then, an $\ICLex$ can be derived which closely resembles the one of Proposition \ref{prop:DCSBMICL}, using similar arguments and calculations. 
	\begin{proposition}
		\label{prop:DCLBMICL}
		Under the dc-LBM model we have:
		\begin{align}
		\label{eq:DCLBMICL}
		\log \p(\Obs|\Clust) = &  \sum_k\log\left(\frac{(n_k-1)!\,n_{k}^{r_k}}{(n_k+r_k-1)!}\right) + \sum_l \log \left(\frac{(m_l-1)! \, m_{l}^{c_l}}{(m_l+c_l-1)!}\right) \nonumber \\
		& +  \sum_{k,l} \log\left(\frac{\nu_{kl}! }{ \left(\beta n_k m_l + 1\right)^{\nu_{kl} + 1}}\right) + \log B(\Obs),
		\end{align}
		where $\nu_{kl} = \sum_{i,j} \rawclustrow_{ik} \rawclustcol_{jl}\rawobs_{ij}$. Here,  $r_k=\sum_{ij} \rawclustrow_{ik} \rawobs_{ij}$ and $c_l=\sum_{ij} \rawclustcol_{jl} \rawobs_{ij}$ correspond to row and column cluster degrees, and $B(\Obs)$ is a constant detailed in the Appendix, that does not depend on $\Clust$ or $\bBeta$.
	\end{proposition}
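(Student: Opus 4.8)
The plan is to mirror the proof of \Cref{prop:DCSBMICL} almost line for line, the only structural change being that the degree parameters now live on two \emph{independent} collections of rescaled simplices --- one indexed by the row clusters, one by the column clusters --- rather than a single collection reused for out- and in-degrees. Setting $\param = (\bm{\Phi}^r, \bm{\Phi}^c, \bm{\Omega})$, I would first expand the conditional likelihood of model~\eqref{model:dcLBM} given $(\Clust, \param)$, obtaining
\[
\p(\Obs \mid \Clust, \param) = \frac{1}{\prod_{i,j}\rawobs_{ij}!}\,\prod_i (\Phi^r_i)^{d^r_i}\,\prod_j (\Phi^c_j)^{d^c_j}\,\prod_{k,l}\Omega_{kl}^{\nu_{kl}}\,\exp\!\big(-n_k m_l\,\Omega_{kl}\big),
\]
with $d^r_i = \sum_j \rawobs_{ij}$, $d^c_j = \sum_i \rawobs_{ij}$ and $\nu_{kl} = \sum_{i,j}\rawclustrow_{ik}\rawclustcol_{jl}\rawobs_{ij}$; the exponential term telescopes into $\sum_{k,l} n_k m_l \Omega_{kl}$ precisely because the two sets of simplex constraints give $\sum_i \rawclustrow_{ik}\Phi^r_i = n_k$ and $\sum_j \rawclustcol_{jl}\Phi^c_j = m_l$. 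Since this expression is a product of a $\bm{\Phi}^r$-factor, a $\bm{\Phi}^c$-factor and a $\bm{\Omega}$-factor, the integral defining $\ICLex$ splits into three independent integrals.

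For the $\bm{\Phi}^r$ integral I would reuse verbatim the rescaled-simplex computation from the proof of \Cref{prop:DCSBMICL}: the volume of $\mathbb{S}_k$ is $n_k^{n_k}/(n_k-1)!$, and after the change of variables to $\Delta_{n_k}$ one recognizes a Dirichlet density with parameter $(d^r_i+1)_{i:\rawclustrow_{ik}=1}$, yielding
\[
\int \p(\bm{\Phi}^r)\,\prod_i (\Phi^r_i)^{d^r_i}\,\dif\bm{\Phi}^r \;=\; \prod_k \frac{(n_k-1)!\,n_k^{r_k}}{(n_k+r_k-1)!}\,\prod_i d^r_i!\,, \qquad r_k = \textstyle\sum_i \rawclustrow_{ik} d^r_i .
\]
The $\bm{\Phi}^c$ integral is identical after swapping rows for columns, producing $\prod_l (m_l-1)!\,m_l^{c_l}/(m_l+c_l-1)!\,\prod_j d^c_j!$ with $c_l = \sum_j \rawclustcol_{jl} d^c_j$. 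For the $\bm{\Omega}$ integral I would apply Gamma--Poisson conjugacy block by block, using $\Exponential(\beta^{-1}) = \Gamma(1,\beta)$, to obtain $\int \p(\Omega_{kl})\,\Omega_{kl}^{\nu_{kl}}e^{-n_k m_l\Omega_{kl}}\,\dif\Omega_{kl} = \nu_{kl}!\,\beta^{\nu_{kl}}/(\beta n_k m_l+1)^{\nu_{kl}+1}$.

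Multiplying the three factors and taking logarithms then reproduces~\eqref{eq:DCLBMICL}, once the purely data-dependent terms $\prod_i d^r_i!\,\prod_j d^c_j!\,/\,\prod_{i,j}\rawobs_{ij}!$ --- together with $\prod_{k,l}\beta^{\nu_{kl}} = \beta^{\sum_{i,j}\rawobs_{ij}}$, which involves only the fixed total count --- are folded into the constant $B(\Obs)$. I do not expect a genuine obstacle: every non-trivial ingredient (the rescaled-simplex volume, the Dirichlet normalizing constant, the Gamma--Poisson conjugacy) has already been established in the dc-SBM derivation, so the remaining work is the bookkeeping needed to keep the row index set and the column index set cleanly separated. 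The one step deserving an explicit sentence of justification is the collapse of the exponential term under the \emph{two} simplex constraints, since that is what makes the three-way factorization of the integral go through.
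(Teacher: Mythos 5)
Your proposal is correct and follows exactly the route the paper takes: its own proof of this proposition is just a pointer to the dc-SBM computation (rescaled-simplex volumes, Dirichlet normalization, Gamma--Poisson conjugacy) with the row/column bookkeeping you spell out, and the new constant $B(\Obs)=\prod_i r_i!\prod_j c_j!/\prod_{i,j}\rawobs_{ij}!$. Your observation that the factor $\prod_{k,l}\beta^{\nu_{kl}}=\beta^{\sum_{i,j}\rawobs_{ij}}$ is independent of $\Clust$ is also the right way to reconcile the Gamma--Poisson integral with the stated formula, which omits that factor.
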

	\begin{proof}[Proof of \Cref{prop:DCLBMICL}]
	Putting $\param = (\bm{\Omega},\bm{\Phi}^r,\bm{\Phi}^c)$, the conditional likelihood, given $\Clust$, of the generative model described in Equation~\eqref{model:dcLBM} writes as:
	\begin{align}
	\label{eq:dcLBMllhood}
	\log \p(\Obs \mid \Clust, \param) & = \prod_{i=1}^{\nb} \prod_{j=1}^{\dim} \prod_{k}^{\Kr} \prod_{l}^{\Kc} \Poisson(\rawobs_{ij} \mid \Phi_i^r \Phi_j^c \Omega_{kl})^{z_{ik}^c z_{jl}^r},   \nonumber \\ 
	\end{align}
	Calculations for each of the term in \Cref{eq:DCLBMICL} are similar to \Cref{subsec:LBM}, with a slight difference in the $B(\Obs)$ term. Indeed, the out (resp. in) degrees are now replaced by rows (resp. columns) degrees:
	\begin{equation}
	B(\Obs) = \frac{\prod_ir_i!\prod_jc_j!}{\prod_{i,j}\rawobs_{ij}!},
	\end{equation}
	with $r_i=\sum_j\rawobs_{ij}$ and $c_j=\sum_i\rawobs_{ij}$ the row (resp. columns) degrees. 
    \end{proof}
	Merge and swap updates for dc-LBM closely resemble those of dc-SBM and can be derived in the same fashion. Moreover, the prior parameter $\beta$ can be set using the same approach as for dc-SBM. However, dealing with bi-partitions induces some particular constraints for both the genetic and hierarchical algorithms. The next section details how they can be extended to co-clustering. 
	
	\subsection{Dealing with bipartitions}

\paragraph{Genetic algorithm} The hybrid algorithm presented in Section~\ref{sec:HybridAlgo} can be easily extended to the co-clustering problem. The latter simultaneously seeks for a partition of the $n$ rows and $p$ columns of a data matrix $\Obs \in \R^{n\times p}$. In this case, we work with a partition $\mathcal{P}$ of $\{1,\hdots, \nb +\dim\}$ with the additional constraints that it decomposes into two disjoint sets of clusters  that corresponds to a partition of $\{1,\hdots,\nb \}$ and $\{\nb +1,\hdots,\nb +\dim\}$ respectively (one for the rows and one for the columns): 
\begin{equation}
\mathcal{P} =\left\{\bC_1^r,\hdots,\bC_{\Kr}^r,\bC_1^c,\hdots,\bC_{\Kc}^c\right\} :\left\{\begin{array}{cl}
\bigcup_{k}\bC^r_k &= \{1,\hdots,\nb \},\\
\bigcup_{l}\bC^c_l &= \{\nb +1,\hdots,\nb +\dim\}
\end{array} \right. .
\end{equation}
\revision{This constraint can be easily incorporated} by defining $\ICLex(\mathcal{P})=-\infty$ for partitions that do not fulfill this constraint and by initializing the algorithm with admissible solutions. This is sufficient to ensure that the obtained solutions will also be compatible with the constraints, since the admissible set of partitions is closed under the crossover and mutation operations used by the algorithm. 

\paragraph{Hierarchical algorithm} Furthermore, the hierarchical methodology of \Cref{sec:Hierarchical} can also be easily extended to bi-partitions. Indeed, the LBM prior
\[
\p(\bPi \mid \alpha) = \Dir_{\Kr}(\bPirow \mid \alpha) \times \Dir_{\Kc}(\bPicol \mid \alpha),
\]
leaves a factorized integrated likelihood for $\p(\Clust \mid \alpha)$, with a common parameter $\alpha$. Thus, the $\ICLlin$ approximation of \Cref{eq:ICLlinclear} is still log-linear in $\alpha$ and writes: 
\begin{equation}
\ICLlin(\Clust, \alpha) = (\Kr -1) \log(\alpha) + (\Kc - 1) \log(\alpha) + I(\Clust),
\end{equation}
with $I(\Clust) = I(\Clustrow) + I(\Clustcol)$ the intercepts defined in Equation~\eqref{eq:intercept}. Hence, with the constraint that a merge cannot be done between rows and columns clusters, one can look for the best row or column fusion to do at each step, therefore building two dendrograms in parallel, with a shared ($\alpha_f)_f$ sequence.

	\printbibliography

\end{document}